\newcommand{\ii}{\mathrm{i}}
\newcommand{\ie}{{i.e.}\ }
\DeclareMathOperator{\diag}{diag}
\DeclareMathOperator{\Vol}{Vol}
\newtheorem{theorem}{Theorem}
\newtheorem{definition}{Definition}
\begin{document}

\title{Entanglement production in the dynamical Casimir effect at parametric resonance}

\author{Ivan Romualdo}
\email{ivanromualdo@usp.br}
\affiliation{Departamento de F\'isica - ICEx, Universidade Federal de Minas Gerais, CP 702, 30161-970, Belo Horizonte - MG, Brazil}
\affiliation{Instituto de F\'isica-Universidade de S\~ao Paulo, CP 66318, 05315970- S\~ao Paulo-SP, Brazil}
\author{Lucas Hackl}
\email{lucas.hackl@mpq.mpg.de}
\affiliation{Max Planck Institute of Quantum Optics, Hans-Kopfermann-Str. 1, D-85748 Garching, Germany}
\affiliation{Munich Center for Quantum Science and Technology, Schellingstra{\ss}e 4, D-80799 M\"{u}nchen, Germany}
\author{Nelson Yokomizo}
\email{yokomizo@fisica.ufmg.br}
\affiliation{Departamento de F\'isica - ICEx, Universidade Federal de Minas Gerais, CP 702, 30161-970, Belo Horizonte - MG, Brazil}

\begin{abstract}
The particles produced from the vacuum in the dynamical Casimir effect are highly entangled. In order to quantify the correlations generated by the process of vacuum decay induced by moving mirrors, we study the entanglement evolution in the dynamical Casimir effect by computing the time-dependent R\'enyi and von Neumann entanglement entropy analytically in arbitrary dimensions. We consider the system at parametric resonance, where the effect is enhanced. We find that, in $(1+1)$ dimensions, the entropies grow logarithmically for large times, $S_A(\tau)\sim\frac{1}{2}\log(\tau)$, while in higher dimensions $(n+1)$ the growth is linear, $S_A(t)\sim \lambda\,\tau$ where $\lambda$ can be identified with the Lyapunov exponent of a classical instability in the system. In $(1+1)$ dimensions, strong interactions among field modes prevent the parametric resonance from manifesting as a Lyapunov instability, leading to a sublinear entropy growth associated with a constant rate of particle production in the resonant mode. Interestingly, the logarithmic growth comes with a pre-factor with $1/2$ which cannot occur in time-periodic systems with finitely many degrees of freedom and is thus a special property of bosonic field theories.
\end{abstract}

\maketitle

\section{Introduction}
The dynamical Casimir effect (DCE) describes the creation of particles from the vacuum by moving mirrors \cite{moore1970quantum,Dodonov:2010zza}.  Many of its basic features, as particle production and the dynamics of quantum fields with time-dependent boundary conditions, closely resemble analogous effects in sophisticated phenomena as the Hawking effect, the Unruh effect, or particle production in an expanding Universe. As a result, the DCE has been extensively investigated as a simple model to explore physical aspects of these more complicated systems in a more manageable context \cite{Davies:1977yv,Carlitz:1986nh, Brevik:2000zb, dodonov2001nonstationary,Lock:2016rmg}. In addition, the DCE has been observed experimentally in a superconducting circuit \cite{wilson2011observation}, opening an avenue for the investigation of processes of particle production from the vacuum in tabletop experiments\footnote{In the experiment \cite{wilson2011observation}, the idealized time-dependent boundary conditions imposed by the moving mirror are implemented in practice by changing the properties of the superconducting circuit by the application of a magnetic flux.}.

\begin{figure}
    \centering
    \begin{tikzpicture}
    \draw[dashed] (5.5,0) -- (5.5,1) -- (6.5,1) -- (6.5,0);
    \draw[thick,->] (0,-.7) -- (0,3.5) node[left,xshift=-1mm]{$t$};
    \draw[thick,->] (-.7,0) -- (7,0) node[below,yshift=-1mm]{$x$};
    \draw[very thick, blue] (0,-.7) -- (0,2.5) node[above,fill=white]{left mirror};
    \draw[dashed,thick] (6,0) -- (6,2.5);
    \draw[domain=0:2,smooth,variable=\t,red,very thick] plot ({6+.5*sin(deg(2*3.1415*\t))},{\t});
    \draw[very thick,red] (6,-.7) -- (6,0) (6,2) -- (6,2.5) node[above]{right mirror};
    \draw [decorate,decoration={brace,amplitude=8pt}] (6,0) -- node [yshift=-6mm] {$L_1$} (0,0);
    \draw [decorate,decoration={brace,amplitude=3pt}] (6.5,0) -- node [yshift=-3mm,xshift=1mm] {$\epsilon L_1$} (6,0);
    \draw [decorate,decoration={brace,amplitude=8pt}] (6.5,1) -- node [xshift=6mm] {$T$} (6.5,0);
    \draw[very thick,brown] (0,1.1) -- (6.29389,1.1);
    \draw (3,1.1) node[above]{$L(t)=L_1[1+\epsilon\,\sin(2\pi\,t/T)]$};
    \end{tikzpicture}
    \caption{Basic scenario of the dynamical Casimir effect in $(1+1)$ dimensions. We have a standing mirror at the origin ($x=0$) and a moving mirror ($x=L(t)$). The distance between the mirrors oscillates as $L(t)=L_1[1+\epsilon \sin{(2\pi t/T)}]$ with period $T$. Note that the mirror only moves during a finite time interval.}
    \label{fig:dynamical-casimir}
\end{figure}
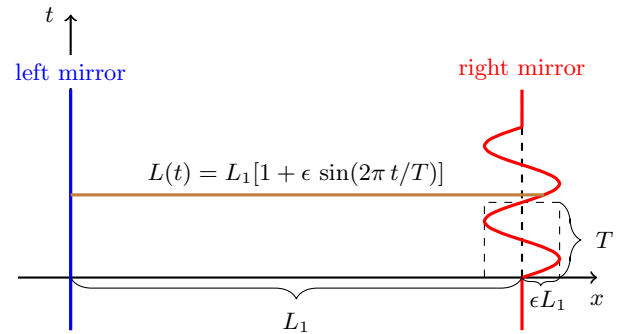

Because the DCE effects are usually small in laboratory conditions, it is interesting to analyze it in a condition of resonance \cite{dodonov1992long,Dodonov:1996zz,lambrecht1996motion,plunien2000dynamical}. In this work,  we focus on the case where a mode of the field is at parametric resonance with the mirror oscillation, following \cite{Dodonov:1996zz}. The general  setup of the DCE at parametric resonance is depicted in Fig.~\ref{fig:dynamical-casimir}. A one dimensional cavity is bounded by two mirrors, one of which is static, while the other is allowed to move. If the moving mirror is set to oscillate harmonically with twice the frequency of some mode of the field, such a mode becomes resonant. This leads to a strong enhancement of the effect, with particles being continuously produced from the vacuum, providing a natural scenario for eventual further experimental explorations of the DCE \cite{Dodonov:2010zza}.

Several works deal with the most immediate aspects of the phenomenon, such as the time evolution of particle production and the Casimir force generated between the plates (for a review, see section II of \cite{dodonov2001nonstationary}). However, a highly unexplored direction is the dynamics of the quantum correlations that emerge due to particle creation and the mixing of field modes induced by the moving mirror. In this work, we contribute to fill this gap by analitically determining the time evolution of entanglement measures for the DCE at parametric resonance.

We compute the R\'{e}nyi and the entanglement entropy production from an initial vacuum state for the DCE at parametric resonance in arbitrary dimensions. We consider a massless scalar field in a hyperrectangular cavity where the field is required to vanish on the boundaries. For $t\leq0$, the cavity is kept fixed. For $t>0$, we allow the length of the cavity to vary in one direction for a finite amount of time by keeping one mirror fixed and setting the other to oscillate according to 
\begin{equation}
    L(t)=L_1[1+\epsilon\sin{(2\pi t/T)}].
\end{equation}
After a finite number of oscillations, we then let the mirror return to its starting position, and the cavity is kept fixed again. The frequency of the mirror oscillation is set to twice that of some cavity mode, giving rise to parametric resonance. We select the lowest energy mode to be the resonant, since this is the most likely scenario to be reproduced in a laboratory. The time evolution of particle production in such conditions was fully analyzed in \cite{Dodonov:1996zz}, where, under reasonable approximations, the Bogoliubov transformation between the initial and final states was computed analytically for all times in both $(1+1)$ and $(3+1)$ dimensions.

In order to compute the entropies, we apply recently developed symplectic techniques for the description of the dynamics of entanglement for Gaussian states in linear bosonic systems \cite{Bianchi:2015fra,Bianchi:2017kgb}. In this framework, the R\'{e}nyi entropy is first computed as a volume associated with the subsystem in phase space. Then, according to a theorem derived in \cite{Bianchi:2017kgb}, the asymptotic behavior of the R\'{e}nyi and entanglement entropies will converge. Moreover, in the presence of Lyapunov instabilities, the asymptotic production of entanglement entropy is completely determined by the Lyapunov exponents. We show that in $(1+1)$ dimensions the entanglement entropy grows logarithmically for large times. In higher dimensions, the resonance is associated with a Lyapunov instability, leading to an asymptotic exponential particle production and linear growth of the entanglement entropy, with a production rate equal to the Lyapunov exponent of the system.

This paper is structured as follows. In Section \ref{sec:Ham.DCE}, we review the Hamiltonian formalism of the dynamical Casimir effect and the analytical solution for the case of  parametric resonance obtained in \cite{Dodonov:1996zz}. In Section \ref{sec:Ent.G. states}, we discuss the necessary background on symplectic techniques for the dynamics of entanglement of Gaussian states. In Section~\ref{sec:Ent.Prod}, we compute analytically the time evolution of the R\'enyi and entanglement entropies in the dynamical Casimir effect at parametric resonance. Finally, we summarize and discuss our results in Section~\ref{sec:discussion}. 

\section{Hamiltonian formalism for the dynamical Casimir effect}\label{sec:Ham.DCE}

\subsection{The model}
Let $\phi(t,x)$ be a neutral massless scalar field in $(d+1)$-dimensions described by the Lagrangian
\begin{equation}
    L = -\frac{1}{2} \int d^n x  \, (\partial^\mu \phi) \, (\partial_\mu \phi) \, ,
\end{equation}
where we adopt the mostly plus convention for the Minkowski metric, $\eta_{\mu \nu} = \diag(-,+,\dots,+)$. Requiring the action to be stationary, one finds that the field satisfies the Klein-Gordon equation
\begin{equation} \label{eq:KG}
    \frac{\partial^2 \phi}{\partial t^2} - \nabla^2 \phi = 0 \, .
\end{equation}
The associated momentum is
\begin{equation}
    \pi(t,x) = \partial_t \phi(t,x) \, ,
\end{equation}
and the Hamiltonian is
\begin{equation}
    H= \frac{1}{2} \int d^n x  \, \left[  \pi^2 + (\nabla \phi)^2 \right] \, .
\end{equation}

We are interested in the situation where the field is confined to a finite spatial volume $V(t)$. More specifically, we consider a field restricted to a time-dependent parallelepiped bounded in some direction $x^1$ to a finite interval $x^1 \in [0,L(t)]$, where the first boundary is fixed and the second boundary is allowed to move arbitrarily. In the remaining directions, the boundaries are fixed, so that $x^i \in [0,L_i]$, $i=2,\dots,d$, for $x \in V(t)$. Dirichlet boundary conditions are imposed at $\partial V(t)$:
\begin{equation} \label{eq:boundary-conditions}
    \phi(t,x)|_{\partial V(t)} = 0 \,.
\end{equation}
In particular,
\begin{equation}
    \phi(t,x^1=0) = \phi(t,x^1=L(t))=0 \, .
\end{equation}
If the field $\phi$ represents a component of the electromagnetic field, then these boundary conditions correspond to perfect mirrors, and the field is confined to a cavity bounded by perfectly reflecting plates. The boundary is assumed to move only for a finite period of time $T$, returning at the end to the starting position $L_1$:
\begin{equation}
    L(t) = L_1 \, , \quad \text{ for } t\leq 0 \text{ and } t \geq T \,.
\end{equation}
In the quantum theory, the motion of the mirrors will lead to particle production, and under this restriction the number of produced particles remains finite.

Let $\mathcal{S}$ be the space of solutions of the Klein-Gordon equation \eqref{eq:KG} with boundary conditions \eqref{eq:boundary-conditions}. This space is equipped with an invariant bilinear form
\begin{equation}
    (\phi,\psi) = i \int_{V(t)} d^n x \, \left[ \phi^* (\partial_t \psi) - (\partial_t \phi^*) \psi \right] \, ,
\end{equation}
which is not positive definite. The space of solutions can be decomposed into a direct sum of subspaces of positive and negative energy, $\mathcal{S}=\mathcal{S}^+ \oplus \mathcal{S}^-$. The decomposition is required to be such that (i) positive-energy solutions have positive squared norm $(\phi,\phi)$ and (ii) the complex conjugate of a positive-energy solution is a negative-energy solution, $\mathcal{S}^- = \overline{\mathcal{S}^+}$.
The decomposition is not unique, and distinct decompositions will correspond to distinct choices of vacuum in the quantum theory. The restriction of the invariant bilinear form to $\mathcal{S}^+$ defines an inner product in this subspace, turning it into a Hilbert space $\mathcal{H}$. We denote by $\{u_n(t,x)\}$ an orthonormal basis of $\mathcal{H}$. The solutions $u_n$ are positive-energy normal modes for the Klein-Gordon equation.

The canonical quantization of the scalar field $\phi(t,x)$ and its associated momentum $\pi(t,x)$ provides field operators satisfying the canonical comutation relations at equal times:
\begin{align}
    [\hat{\phi}(t,x),\hat{\pi}(t,x')] & = i \delta(x-x') \, , \\
    [\hat{\phi}(t,x),\hat{\phi}(t,x')] &= [\hat{\pi}(t,x),\hat{\pi}(t,x')] = 0 \, ,
\end{align}
and the equation of motion \eqref{eq:KG}. For each choice of normal modes $u_n$, the field operator can be expanded as
\begin{equation} \label{eq:field-expansion}
    \hat{\phi}(t,x) = \sum_n [u_n(t,x) \hat{a}_n + u^*_n(t,x) \hat{a}_n^\dagger] \, ,
\end{equation}
where the $\hat{a}_n, \hat{a}_n^\dagger$ are creation and annihilation operators satisfying the usual commutation relations:
\begin{equation}
    [\hat{a}_m,\hat{a}_n^\dagger] = \delta_{mn} \, , \quad [\hat{a}_m,\hat{a}_n] = [\hat{a}_m^\dagger,\hat{a}_n^\dagger]=0\, .
\end{equation}
The quantized canonical momentum operator is obtained by taking the time derivative of the field operator in the representation \eqref{eq:field-expansion}. Distinct choices of normal modes correspond to distinct representations of the quantum field. The vacuum state $\ket{0}$ is the state annihilated by all $\hat{a}_n$'s and depends on the choice of $\mathcal{H}=\mathcal{S}^+$ (although not on the specific basis $\{u_n\}$ chosen for $\mathcal{H}$). 

We introduce two special representations, called the \emph{in}- and \emph{out}-representations. For $t<0$, the Hamiltonian is time-independent, and one can find an orthonormal basis of positive energy $in$-solutions for which the time-dependence is factored out in this region:
\begin{equation}
    u_n^{(in)}(t,x) = \psi^{n}(t,x) \propto e^{- i \omega_n t} \, , \quad \text{for } t \leq 0 \, .
\end{equation}
Similarly, one can find an orthonormal basis of positive energy $out$-solutions satisfying
\begin{equation}
    u_k^{(out)}(t,x) = \psi_k(t,x) \propto e^{- i \omega_k t} \, , \quad \text{for } t\geq T \, .
\end{equation}
Both sets $\mathcal{B}_{in}=\{\psi^n,\psi^{n *}\}$ and $\mathcal{B}_{out}=\{\psi_k,\psi_k^*\}$ are linear bases of $\mathcal{S}$, but define distinct decompositions into positive and negative energy subspaces. The field operator can be expanded in either set of normal modes:
\begin{align} \label{eq:rep-out}
    \hat{\phi}(t,x) &= \sum_k [\psi_k(t,x) \hat{a}_k + \psi_k^*(t,x) \hat{a}_k^\dagger] \, , \\
    \label{eq:rep-in}
    &= \sum_n [\psi^n(t,x) \hat{b}_n + \psi^{n*}(t,x) \hat{b}_n^\dagger] \, ,
\end{align}
where the $\hat{b}_n,\hat{b}_n^\dagger$ are annihilation and creation operators in the $in$-representation, and the $\hat{a}_k,\hat{a}_k^\dagger$ are annihilation and creation operators in the $out$-representation. Expanding the $in$-modes in the basis of $out$-modes,
\begin{equation} \label{eq:psi-in-out}
    \psi^n = \sum_k (\alpha_{nk} \psi_k + \beta_{nk} \psi_k^*) \, ,
\end{equation}
and substituting into Eq.~\eqref{eq:rep-in}, we find
\begin{align} \label{eq:B-transf-complex}
    \hat{a}_k &= \sum_n (\alpha_{nk} \hat{b}_n + \beta_{nk}^* \hat{b}_n^\dagger) \, , \\
    \hat{a}_k^\dagger &= \sum_n (\alpha_{nk}^* \hat{b}_n^\dagger + \beta_{nk} \hat{b}_n) \, .
\end{align}
This is a Bogoliubov transformation describing the relation between the two representations. The constants $\alpha_{nk},\beta_{nk}$ are the Bogoliubov coefficients of the transformation. The transformation is encoded in a symplectic matrix:
\begin{equation} \label{eq:Bogoliubov-matrix-general}
    M = \begin{pmatrix}
         \alpha^T & \beta^\dagger   \\
         \beta^T  & \alpha^\dagger
    \end{pmatrix} \, .
\end{equation}

The evolution of any property of the quantum field due to the motion of the boundaries for $t \in [0,T]$ is completely characterized by the Bogoliubov coefficients. In particular, the process of particle creation from the initial vacuum has a simple description. Let the initial state for $t\leq 0$ be the $in$-vacuum annihilated by all $\hat{b}_n$:
\begin{equation}
    \hat{b}_n \ket{0,in} = 0 \, , \quad \forall n.
\end{equation}
Since we are in the Heisenberg representation, the state of the field remains unchanged. However, the number of particles $N_k$ in a mode $k$ observed after the motion of the boundary is described by the $out$-operators:
\begin{equation}
    \langle N_k \rangle = \bra{0,in} \hat{a}_k^\dagger \hat{a}_k \ket{0,in} = \sum_n |\beta_{nk}|^2 \, .
\end{equation}
We see that the vacuum is unstable if the coefficients $\beta_{nk}$ do not vanish. In the next section we will discuss formulas for the R\'enyi and entanglement entropies in the $out$ region in terms of the coefficients of the Bogoliubov transformation. Before that, we review the calculation of the Bogoliubov coefficients for the dynamical Casimir effect at parametric resonance performed in \cite{Dodonov:1996zz}.

\subsection{Dynamical Casimir effect at resonance}

Let $n$ be a multi-index $n=(n_1, \dots, n_d)$. The $in$-modes $\psi^n$ are solutions of the Klein-Gordon equation \eqref{eq:KG} that reduce to standing waves in the $in$-region:
\begin{equation}
    \psi^n = \left[ \prod_{i=1}^d \frac{1}{\sqrt{\pi n_i}} \sin \left(\frac {n_i \pi}{L_i} x^i \right) \right] e^{- i \omega_n t} \, , \quad \text{for }t\leq0 \, ,
\end{equation}
where
\begin{equation}
    \omega_n = \pi \sqrt{\sum_{i=1}^d \left( \frac{n_i}{L_i} \right)^2} \, .
\end{equation}

While the mirror is moving, the modes evolve in a nontrivial way. At each time $t$, we can expand them in a basis of instantaneous Fourier modes that satisfy the imposed boundary conditions,
\begin{align} \label{eq:psi-evol-inst-basis}
    \psi^n = &\sum_k  \left[ \prod_{i=1}^d \frac{1}{\sqrt{\pi n_i}} \sin \left(\frac {n_i \pi}{L_i} x^i \right) \right]  \nonumber \\
    & \qquad \times \sqrt{\frac{L_1}{L(t)}} \frac{1}{\sqrt{\pi n_1}} \sin \left(\frac {n_1 \pi}{L(t)} x^1 \right) Q^n_k(t) \, .
\end{align}
Substituting this expansion into the Klein-Gordon equation \eqref{eq:KG}, one finds equations of motion for the Fourier amplitudes $Q^n_k(t)$, which we will discuss in a moment. The Fourier amplitudes satisfy the initial conditions
\begin{equation}
    Q^n_k(0) = \delta_{kn} \, , \qquad \dot{Q}^n_k(0) =- i \omega_n \delta_{kn} \, .
\end{equation}

In the $out$-region, the evolution is again trivial. The positive energy $out$-modes are defined as
\begin{equation}
    \psi_k = \left[ \prod_{i=1}^d \frac{1}{\sqrt{\pi k_i}} \sin \left(\frac {k_i \pi}{L_i} x^i \right) \right] e^{- i \omega_k t} \, , \quad \text{for } t \geq T \, ,
\end{equation}
The $in$-mode $\psi^n$ evolves into a superposition of $out$-modes of positive and negative energy, as written in Eq.~\eqref{eq:psi-in-out}, with constant Bogoliubov coefficients $\alpha,\beta$ determined by:
\begin{equation}
    Q^n_k(t) = \alpha_{nk} e^{-i \omega_k t} + \beta_{nk} e^{i \omega_k t} \, , \quad \text{for } t\geq T \, .
\end{equation}

Up to this point, we considered a generic motion $L(t)$ of the boundary. Now let the moving boundary oscillate harmonically around its initial position.  When the frequency of the boundary oscillation is twice that of some normal mode, the system is said to be at parametric resonance. Explicit formulas for the Bogoliubov coefficients can then be obtained under an approximation of slow variation of the coefficients, as first shown in \cite{Dodonov:1996zz}, and we review the relevant results in the following subsections.

\subsubsection{(1+1) dimensions}
In one spatial dimension, all multi-indices reduce to integer numbers, and the expansion \eqref{eq:psi-evol-inst-basis} of the $in$-modes in the instantaneous basis becomes
\begin{equation} \label{eq:psi-evol-inst-basis-1d}
    \psi^n = \sum_k \sqrt{\frac{L_1}{L(t)}} \frac{1}{\sqrt{\pi n}} \sin \left(\frac {n \pi}{L(t)} x \right) Q^n_k(t) \, .
\end{equation}
The evolution equation for the Fourier amplitudes reads
\begin{equation} \label{eq:eom-1d}
\begin{split}
    \ddot{Q}^n_k + \Omega_k^2(t) Q^n_k &= 2 \lambda(t) \sum_j g_{kj} \dot{Q}^n_j + \dot{\lambda}(t) \sum_j g_{kj} Q^n_j \\
    &\qquad+ \lambda^2(t) \sum_{j,l} g_{jk} g_{jl} Q^n_l \, , 
    \end{split}
\end{equation}
where
\begin{equation} \label{eq:def-omegat-lambda}
    \Omega_k(t) =  \frac{\pi k}{L(t)} \, , \quad \lambda(t) = \frac{\dot{L}(t)}{L(t)} \, ,
\end{equation}
and the coefficients $g$ form an antisymmetric matrix with components
\begin{equation}
    g_{jk} = (-1)^{k+j} \frac{2jk}{k^2-j^2} \, , \quad \text{for } j \neq k \, .
\end{equation}

Consider the case where the first normal mode is resonant, that is, let the moving boundary oscillate harmonically around its initial position
\begin{equation} \label{eq:resonant-omega}
    L(t) = L_1 \left[ 1 + \epsilon \sin(2 \omega_1 t) \right] \, ,
\end{equation}
with twice the frequency of the first normal mode,
\begin{equation}
    \omega_1 = \pi/L_1 \, .
\end{equation}
The functions $\Omega_k(t)$ and $\lambda(t)$ can then be determined from Eqs.~\eqref{eq:resonant-omega} and \eqref{eq:def-omegat-lambda}. The oscillation amplitude is assumed to be small compared to the initial width $L_1$ of the boundary, $\epsilon \ll 1$. To first order in $\epsilon$, the parameters $\lambda$ and $\dot{\lambda}$ in the evolution equation oscillate harmonically with frequency $2 \omega_1$, and $\lambda^2 \sim \epsilon^2$ can be neglected.

One can look for solutions of the evolution equation \eqref{eq:eom-1d} of the form
\begin{equation} \label{eq:Q-slow}
	Q^n_k(t) = \alpha_{nk}(t) e^{- i \omega_k t} + \beta_{nk}(t) e^{i \omega_k t} \, ,
\end{equation}
where the Bogoliubov coefficients are now allowed to be time-dependent, reaching their asymptotic values $\alpha_{nk},\beta_{nk}$ at $t=T$. The time-dependent coefficients are assumed to vary slowly in time: after substituting \eqref{eq:Q-slow} into Eq.~\eqref{eq:eom-1d}, terms proportional to $\ddot{\alpha}$ and $\ddot{\beta}$ are neglected, and the Bogoliubov coefficients are considered approximately constant during one period of oscillation of any normal mode. Multiplying the resulting equation of motion by $e^{i \omega_k t}$ or $e^{-i \omega_k t}$ and averaging over intervals $T_k=2\pi/\omega_k$, one obtains a set of coupled differential equations for $\alpha$ and $\beta$, respectively. These are solved in \cite{Dodonov:1996zz}. Let us gather the relevant results for our purposes.

The complete elliptic integrals of the first and second kind are defined, respectively, as
\begin{align}
	K(\kappa) &= \int_0^{\pi/2} d\alpha \frac{1}{\sqrt{1-\kappa^2 \sin^2 \alpha}}\label{eq:elliptic K} \\ 
	E(\kappa) &=\int_0^{\pi/2} d\alpha \sqrt{1-\kappa^2 \sin^2 \alpha}\label{eq:elliptic E} \, .
\end{align}
It is convenient to introduce the new time variable
\begin{equation}
	\tau = \frac{1}{2} \epsilon \omega t \, ,
\end{equation}
and define the quantities
\begin{equation}\label{eq:kappas}
	\kappa = \sqrt{1-e^{-8\tau}} \, , \qquad \tilde{\kappa} = \sqrt{1-\kappa^2} = e^{-4\tau} \, .
\end{equation}
The lowest Bogoliubov coefficients are then given by
\begin{align} \label{eq:lowest-coefficients}
	\alpha_{11} &= \frac{2}{\pi} \frac{E(\kappa) + \tilde{\kappa} K(\kappa)}{1+\tilde{\kappa}} \, , \\
	\beta_{11} &= - \frac{2}{\pi} \frac{E(\kappa) - \tilde{\kappa} K(\kappa)}{1-\tilde{\kappa}} \, .
\end{align}
Coefficients with one larger odd index are obtained from the recurrence relations
\begin{align*}
	\sqrt{3} \, \alpha_{31} &= - \beta_{11} - \dot{\alpha}_{11} \, , \\
	\sqrt{3} \, \beta_{31} &= - \alpha_{11} - \dot{\beta}_{11} \, , \\
	\sqrt{n(n+2)} \, \alpha_{n+2,1} &= \sqrt{n(n-2)} \, \alpha_{n-2,1} - \dot{\alpha}_{n1} \, , 	\quad n \geq 3 \, , \\
	\sqrt{n(n+2)} \, \beta_{n+2,1} &= \sqrt{n(n-2)} \, \beta_{n-2,1} - \dot{\beta}_{n1} \, , 	\quad n \geq 3 \, , 
\end{align*}
where the dots represent derivatives with respect to $\tau$, and the relations
\begin{align*}
	\alpha_{1,2j+1} = (-1)^j (2j+1) \alpha_{2j+1,1} \, , \\
	\beta_{1,2j+1} = (-1)^j (2j+1) \beta_{2j+1,1} \, .
\end{align*}
All coefficients with some even index vanish. In order to compute the evolved state of the subsystem formed only by the resonant mode, it is enough to know the Bogoliubov coefficients with some index equal to $1$, which are all determined by the relations above.

In our analysis of entanglement production, the following formulas for infinite sums of products of Bogoliubov coefficients will play an important role:
\begin{align}
	& \sum_{n=1}^\infty \alpha_{n1} \dot{\alpha}_{n1} = \sum_{n=1}^\infty \beta_{n1} 	\dot{\beta}_{n1} = - \alpha_{11} \beta_{11} \, , 
	\label{eq:sum-alpha-dot-alpha}\\
	& \sum_{n=1}^\infty \left( \alpha_{n1} \dot{\beta}_{n1} + \dot{\alpha}_{n1} \beta_{n1} 	\right)=  - \left( \alpha_{11}^2 + \beta_{11}^2  \right) \, , 
	\label{eq:sum-alpha-dot-beta}\\
	& \sum_{k,n=1}^\infty \beta_{nk} \dot{\beta}_{nk} = -\sum_{n=1}^\infty \alpha_{n1} 	\beta_{n1} \, .
	\label{eq:double-sum-beta-dot-beta}
\end{align}

\subsubsection{(d+1) dimensions}\label{sec:d+1-dim}
In higher dimensions, the expansion of the $in$-modes in the instantaneous basis has the general form \eqref{eq:psi-evol-inst-basis}. Upon substitution of this expansion into the Klein-Gordon equation, one obtains equations of motion for the Fourier coefficients $Q^n_k$ similar to \eqref{eq:eom-1d}. The difference is that $j,k,n$ are now multi-indices, the frequencies of the normal modes are
\begin{equation} \label{eq:omega-d}
	\Omega_k(t) = \pi \sqrt{\sum_{i=1}^d \left( \frac{k_i}{L_i(t)} \right)^2} \, ,
\end{equation}
and the coefficients $g_{jk}$ have a more complicated form, which is not relevant for our purposes, except for the fact that they are still constant. 

A mode $r$ is set at resonance by letting the boundary oscillate with twice the time-independent frequency $\omega_r$ of the mode. One may take, for instance,
\begin{equation}
	L(t) = L_1 \left[1- \epsilon \cos(2 \omega_r t) \right]) \, 
\end{equation}
with
\begin{equation} \label{eq:omega-r}
	\omega_r = \pi \sqrt{\sum_{i=1}^d \left( \frac{k_i}{L_i} \right)^2} \, ,
\end{equation}
as done in \cite{Dodonov:1996zz}. Note that, in contrast to the one-dimensional case, the frequencies are not equidistant for $d>1$. To first order in $\epsilon$, the parameters $\lambda$ and $\dot{\lambda}$ in the evolution equation oscillate harmonically with frequency $2 \omega_r$, and $\lambda^2$ can be neglected, as before.

One can look again for solutions of the form \eqref{eq:Q-slow} under a slow-variation approximation. The resulting equations for the $Q$'s are quite different, however, as compared to the one-dimensional case. The reason for that is the following. When multiplied by $e^{\pm i \omega_k t}$, to first order in $\epsilon$, the right-hand-side of Eq.~\eqref{eq:eom-1d} becomes a sum of terms proportional to $\exp[i(\pm \omega_j \pm \omega_k \pm 2 \omega_r)]$. The integral over a period $T_k=2\pi/\omega_k$ of each such term vanishes unless the sum of frequencies in the exponential vanishes. But since they are not equally spaced, this never happens, and the equations for the distinct modes all decouple. One is left with an infinite set of equations for independent oscillators with time-dependent frequencies
\begin{equation}
	\Omega_k(t) \sim \omega_k \left[ 1+ 2 \gamma \cos(2 \omega_r t)\right] \, ,
\end{equation}
with
\begin{equation} \label{eq:gamma}
    \quad \gamma = \frac{\epsilon}{2} \frac{\pi^2 (k_1/L_1)^2}{\omega_k^2} \, .
\end{equation}

The calculation of the Bogoliubov coefficients can be done independently for each mode. Put $Q_k= Q^k_k$. Fourier amplitudes $G^n_k$ with $k\neq n$ vanish, since the modes are decoupled. We need to solve
\begin{equation} \label{eq:resonant-oscillator-eom}
    \ddot{Q}_k + \Omega_k^2(t) Q_k = 0 \, ,
\end{equation}
the equation of a time-dependent oscillator with a harmonically oscillating frequency. One can look again for solutions of the form \eqref{eq:Q-slow}, which we now write as
\begin{equation}
    Q_k(t) = \alpha_k(t) e^{- i \omega_k t} + \beta_k(t) e^{i \omega_k t} \, ,
\end{equation}
with $\alpha_k=\alpha_{kk}$ and $\beta_k=\beta_{kk}$. Under a slow variation approximation, terms proportional to $\ddot{\alpha}_k$ and $\ddot{\beta}_k$ are neglected. After averaging over the fast oscillations, one finds for the resonant mode
\begin{align} \label{eq:higher-d-alpha-beta-dot}
    \dot{\alpha}_r= -i \omega_r \gamma \beta_r \,,\qquad
    \dot{\beta}_r= i \omega_r \gamma \alpha_r \,,
\end{align}
while the evolution of the non-resonant modes is trivial in this approximation: $\alpha_k,\beta_k=\textrm{constant}$, for $k\neq r$. Initial conditions corresponding to the $in$-vacuum are given by
\begin{equation}
    \alpha_r(0)=1 \, , \quad \beta_r(0)= 0\, .
\end{equation}
Integrating Eq.~\eqref{eq:higher-d-alpha-beta-dot} with these initial conditions, we obtain the desired Bogoliubov coefficients
\begin{align} \label{eq:resonant-Bogoliubov-2d}
    \alpha_r(t) &= \cosh(\omega_r \gamma t ) \, , \nonumber \\
    \beta_r(t) &= i \sinh(\omega_r \gamma t) \, . 
\end{align}

The time-dependent symplectic transformation associated with such Bogoliubov coefficients, which describes the evolution of the resonant mode, is given by
\begin{equation} \label{eq:resonant-mode-evolution-d}
    M_r(t) = \begin{pmatrix}
    \cosh(\omega_r \gamma t ) & - i \sinh(\omega_r \gamma t) \\
    i \sinh(\omega_r \gamma t) & \cosh(\omega_r \gamma t ) 
    \end{pmatrix} \, ,
\end{equation}
and has the simple form
\begin{equation} \label{eq:generator-r-mode-2p1}
    M_r(t) = \exp(t K_r) \, , \quad K_r = \begin{pmatrix} 
    0 & -i \omega_r \gamma \\ i \omega_r \gamma & 0
    \end{pmatrix}
\end{equation}
with a time-independent symplectic generator $K_r$.

\section{Entanglement dynamics of Gaussian states}\label{sec:Ent.G. states}
In this section, we review the basic properties of Gaussian states. We focus on systems with a finite number of bosonic degrees of freedom, which we can later take to infinity to recover a bosonic field theory. In particular, we review compact expressions for the entropy and the von Neumann entanglement entropy in terms of the covariance matrix of the Gaussian state. Our conventions closely follow~\cite{Bianchi:2015fra,Bianchi:2017kgb,hackl2018aspects}, while other reviews include~\cite{eisert2003introduction,weedbrook_gaussian_2012}.

\subsection{Gaussian states}
Before considering the field theory case, we focus on a system with $N$ degrees of freedom. Here, the classical phase space $V\simeq\mathbb{R}^{2N}$ is equipped with a symplectic form $\Omega^{ab}$, \ie an antisymmetric and non-degenerate bilinear form. For the quantum theory, we choose the basis
\begin{align}
    \hat{\xi}^a\equiv(\hat{b}_1,\cdots,\hat{b}_N,\hat{b}_1^\dagger,\cdots,\hat{b}_N^\dagger)\label{eq:xi-basis}
\end{align}
of creation and annihilation operators. This basis also fixes the form of the symplectic form to get the correct commutation relations
\begin{align}
    [\hat{\xi}^a,\hat{\xi}^b]=\ii \Omega^{ab}\quad\text{with}\quad\Omega\equiv -\ii\begin{pmatrix}
    0 & \mathbb{1}\\
    -\mathbb{1} & 0
    \end{pmatrix}\,.\label{eq:commutator-Omega}
\end{align}
The basis is not Hermitian, which means that we can define the $2N \times 2N$ matrix $C$, such that
\begin{align}
    \hat{\xi}^{\dagger a}=C^a{}_b\hat{\xi}^b\quad\text{with}\quad C\equiv\begin{pmatrix}
    0 & \mathbb{1}\\
    \mathbb{1} & 0
    \end{pmatrix}\,.
\end{align}
Given a linear observable $\mathcal{O}=f_a\hat{\xi}^a$, we need to require $f_a^*C^a{}_b=f_b$ for $\mathcal{O}$ to be a Hermitian operator.

We can now introduce a special class of states $\ket{\psi}$ that are fully characterized by their displacement vector $z^a$ and their covariance matrix $G^{ab}$, defined as
\begin{align}
    z^a&=\bra{\psi}\hat{\xi}^a\ket{\psi}\,,\\
    G^{ab}&=\bra{\psi}\hat{\xi}^a\hat{\xi}^b+\hat{\xi}^b\hat{\xi}^a\ket{\psi}-2z^az^b\,.
\end{align}
We refer to a state $\ket{\psi}$ as Gaussian state and label it as $\ket{G,z}$ if its linear complex structure $J$ satisfies the condition
\begin{align}
    J^2=-\mathbb{1}\quad\text{with}\quad J^a{}_b=G^{ac}\Omega^{-1}_{cb}\,,
\end{align}
where we introduced the inverse symplectic form $\Omega^{-1}_{ab}$ with $\Omega^{ac}\Omega^{-1}_{cb}=\delta^a{}_b$. The linear map $J: V\to V$ is called \emph{linear complex structure} because it represents the imaginary unit on the classical phase space, \ie it squares to minus identity. We can use directly $J$ to encode the state $\ket{G,z}$ as solution to the equation
\begin{align}
    \frac{1}{2}(\delta^a{}_b-\ii J^a{}_b)(\hat{\xi}^b-z^b)\ket{G,z}=0\,.\label{eq:JdefiningState}
\end{align}
While complex structures have been used to describe vacua of quantum fields in curved spacetime~\cite{ashtekar1975quantum,wald1994quantum} for many years, only recently they became a useful tool in quantum information to parametrize both bosonic~\cite{Bianchi:2015fra,Bianchi:2017kgb,hackl2018aspects,hackl2019minimal} and fermionic Gaussian states~\cite{vidmar2017entanglement,vidmar2018volume,hackl2018circuit,hackl2019average} in a unified manner. As an example, we can choose $z^a=0$ and the covariance matrix to be given by
\begin{align}
    G_0\equiv\begin{pmatrix}
    0 & \mathbb{1}\\
    \mathbb{1} & 0
    \end{pmatrix}\quad\Rightarrow\quad J_0=G_0\Omega^{-1}\equiv \ii\begin{pmatrix}
    \mathbb{1} & 0\\
    0 & -\mathbb{1}
    \end{pmatrix}\,,
\end{align}
which gives rise to the operator-valued vector
\begin{align}
    \frac{1}{2}(\delta^a{}_b-\ii J^a{}_b)(\hat{\xi}^b-z^b)\equiv (\hat{b}_1,\cdots,\hat{b}_N,0,\cdots,0)
\end{align}
that annihilates the state $\ket{G_0,0}$, \ie we just confirmed that the Gaussian state $\ket{G_0,0}$ is nothing else than the vacuum associated to the annihilation operators $\hat{b}_i$. For other choices of $G^{ab}$ and $z^a$, we will find that~\eqref{eq:JdefiningState} gives rise to a Bogoliubov transformation of the form
\begin{align}
    \hat{a}_i=\sum_j (\alpha_{ji}\hat{b}_j+\beta^*_{ji}\hat{b}_j^\dagger )+z_i\,,
\end{align}
where $\hat{a}_i$ represent a new set of annihilation operators annihilating the state $\ket{G,z}$ under consideration.

A defining property of bosonic Gaussian states is the well-known Wick's theorem, \ie that higher order $n$-point correlation functions can be efficiently computed from $1$- and $2$-point correlation functions. To state Wick's theorem, it is useful to define the connected $n$-point correlation function of the state $\ket{G,z}$ as
\begin{align}
    C^{a_1\cdots a_n}_n=\bra{G,z}(\hat{\xi}^{a_1}-z^{a_1})\cdots(\hat{\xi}^{a_n}-z^{a_n})\ket{G,z}\,.
\end{align}
We can then use the connected $2$-point function given by
\begin{align}
    C_2^{ab}=\frac{1}{2}(G^{ab}+\ii\Omega^{ab})\,,
\end{align}
to state Wick's theorem as the equality
\begin{align}
    C_{2n+1}^{a_1\cdots a_{2n+1}}&=0\,,\\
    \begin{split}
    C_{2n}^{a_1\cdots a_{2n}}&=\sum(\text{all $2$-contractions of }C_2)\\
    &=C_2^{a_1a_2}\cdots C_2^{a_{2n-1}a_{2n}}+\cdots\,,
    \end{split}
\end{align}
\ie the connected $n$-point correlation function for odd $n$ vanishes, while for even $n$, it is given by a sum over products of $C_2^{ab}$.

\emph{For the rest of this paper, we will focus on Gaussian states $\ket{G,z}$ with $z^a=0$, \ie they are centered at the origin of phase space. We will write $\ket{G}=\ket{G,0}$ to simplify notation.}

\subsection{Quadratic Hamiltonians} \label{sec:quadratic-H}
The most general class of Hamiltonians that preserve the family of Gaussian states, \ie for which $e^{t\hat{H}}\ket{G,z}$ is again Gaussian, takes the form
\begin{align}
    \hat{H}=\frac{1}{2}h_{ab}\hat{\xi}^a\hat{\xi}^b+f_a\hat{\xi}^a\,.
\end{align}
If we only look at Gaussian states $\ket{G,z}$ with $z^a=0$, we need to choose $f_a=0$ in our Hamiltonian to ensure that $z^a$ remains equal to zero under time evolution. We will therefore restrict ourselves to the class of quadratic Hamiltonians
\begin{align}
    \hat{H}=\frac{1}{2}h_{ab}\hat{\xi}^a\hat{\xi}^b\,,
\end{align}
where we can derive a condition on $h_{ab}$ to ensure that $\hat{H}$ is Hermitian. For $\hat{\xi}^{\dagger a}=C^a{}_b\hat{\xi}^b$, we find the condition
\begin{align}
    h_{ab}=(C^\intercal)_a{}^c h_{dc} C^d{}_b\,.
\end{align}
Furthermore, we can require $h_{ab}$ to be symmetric, \ie $h_{ab}=h_{ba}$, because the anti-symmetric part will only contribute a constant to the energy. Defining $U(t)=e^{-\ii t\hat{H}}$, we can use the well-known Baker-Campbell-Hausdorff identity to compute
\begin{align}
    U^\dagger(t)\hat{\xi}^aU(t)=M(t)^a{}_b\hat{\xi}^b\,,\label{eq:xi-evolution}
\end{align}
where the symplectic transformation is given by
\begin{align}
    M(t)=e^{t K} \, , \quad\text{with } K^a{}_b=\Omega^{ac}h_{cb}\,.
\end{align}
For the dynamical Casimir effects, we will need to consider quadratic Hamiltonians with explicit time dependence
\begin{align}
    \hat{H}(t)=\frac{1}{2}h(t)_{ab}\hat{\xi}^a\hat{\xi}^b\,,
\end{align}
which leads to a time evolution determined by the time-ordered exponentials
\begin{align}
    U(t)&=\mathcal{T}e^{-\ii\int^t_0\,\hat{H}(t')\,dt'}\,,\\
    M(t)&=\mathcal{T}e^{\int^t_0\,K(t')\,dt'}\quad\text{with } K(t)^a{}_b=\Omega^{ac}h(t)_{cb}\,.\label{eq:M(t)-time-ordered}
\end{align}
In practice, we will not compute $M(t)$ as a time-ordered exponential, but rather by solving the underlying equations of motion explicitly. Using~\eqref{eq:xi-evolution}, which also continues to be valid for Hamiltonian with explicit time dependence, we can compute the time evolution of the covariance matrix to be given by
\begin{align}
    G(t)^{ab}=M(t)^a{}_c\, G^{cd}\, M^\intercal(t)_d{}^b\,.
\end{align}
For our class of Gaussian states, we therefore find
\begin{align}
    \ket{G_t}=U(t)\ket{G_0}=\ket{M(t)G_0M^\intercal(t)}\,.
\end{align}
A special class of time-dependent Hamiltonians are \emph{Floquet} Hamiltonians, \ie time-periodic Hamiltonians with period $T$ satisfying
\begin{align}
    \hat{H}(t+T)=\hat{H}(t)\,.
\end{align}
Such Hamiltonians can be described stroboscopically by only looking at the evolved states at $t=n\,\tau$ with $n\in\mathbb{Z}$. For this, it is sufficient to compute $U(\tau)$ which allows us to define a time-independent effective Hamiltonian $\hat{H}_{\text{eff}}=\frac{1}{\tau}\log U(\tau)$. For quadratic Hamiltonians $\hat{H}(t)$, we find $\hat{H}_{\text{eff}}=\frac{1}{2}h^{\text{eff}}_{ab}\hat{\xi}^a\hat{\xi}^b$ to be also quadratic with
\begin{align}
    h^{\text{eff}}_{ab}=\frac{1}{T}\Omega^{-1}_{ac} \big(\log M(T)\big)^c{}_b\,,
\end{align}
where $M(\tau)$ can be evaluated as the time-ordered exponential~\eqref{eq:M(t)-time-ordered} or by integrating the equations of motion.

\subsection{Entanglement measures}
Given a pure Gaussian state $\ket{G}$ and a choice of subsystem $A\subset V$ (inducing a tensor product decomposition $\mathcal{H}=\mathcal{H}_A\otimes\mathcal{H}_B$), we can quantify the amount of quantum correlations between $A$ and its complement $B$ using different entanglement measures. We are particularly interested in the von Neumann entropy $S_A(\ket{G})$ and the R\'enyi entropy $R_A(\ket{G})$ of the reduced state
\begin{align}
    \rho_A=\mathrm{Tr}_{B}\ket{G}\bra{G}\,,
\end{align}
where we trace out the degrees of freedom in $B$.

One can show that the mixed state $\rho_A$ is also Gaussian, \ie it continues to satisfy Wick's theorem and can be efficiently parametrized by the restriction of $G^{ab}$ to the subsystem $A$. Mathematically speaking, we have a decomposition of the classical phase space $V$ as a direct sum $V=A\oplus B$, as well as of its dual $V^*=A^*\oplus B^*$, such that the bilinear form $G: V^*\times V^*\to\mathbb{R}$ can be meaningfully restricted to
\begin{align}
    G_A: A^*\times A^*: (a_1,a_2)\mapsto G(a_1,a_2)\,.
\end{align}
In practice, we use the matrix representation of $G$ with respect to a basis $(\hat{\xi}^a_A,\hat{\xi}^a_B)$
\begin{align}
    \hat{\xi}^a_A&\equiv(\hat{b}_1^A,\cdots,\hat{b}_{N^A},\hat{b}_1^{A\dagger},\cdots,\hat{b}^{A\dagger}_{N_A})\,,\\
    \hat{\xi}^a_B&\equiv(\hat{b}_1^B,\cdots,\hat{b}_{N_B}^B,\hat{b}_1^{B\dagger},\cdots,\hat{b}^{B\dagger}_{N_B})\,,
\end{align}
with $\hat{\xi}^a=(\hat{\xi}^a_A,\hat{\xi}^a_B)$ and $N=N_A+N_B$. Note that this basis is complex, which implies that also $G$ correspond to the analytically continued bilinear form on the complexified phase space $V_{\mathbb{C}}$ and its dual $V^*_{\mathbb{C}}$.

The key result is that $\rho_A$ is fully characterized by $G_A$, which implies that any function of $\rho_A$ can be written purely in terms of $G_A$. The von Neumann entropy $S_A(\ket{G})$ and the R\'enyi entropy $R^{(n)}_A(\ket{G})$ of order $n$ are
\begin{align}
    S_A(\ket{G})&=-\mathrm{Tr}\rho_A\log{\rho_A}\,,\\
    R^{(n)}_A(\ket{G})&=\frac{1}{1-n}\mathrm{Tr}(\rho_A^n)\,,
\end{align}
and we define $R_A(\ket{G}):=R_A^{(2)}(\ket{G})$ as \emph{the} R\'enyi entropy, dropping the reference to order 2 in this case. To express these functions in terms of $G_A$, it is useful to define the restricted linear complex structure
\begin{align}
    (J_A)^a{}_b=-(G_A)^{ac}(\Omega_A)^{-1}_{cb}\,,
\end{align}
where $\Omega_A$ is the restriction of $\Omega$ to the subsystem $A$. An important difference between the full linear complex structure $J$ (describing a pure Gaussian state) and the restricted linear complex structure $J_A$ (describing in general a mixed Gaussian state) lies in the fact that $-J_A^2$ is not necessarily the identity---instead, we have the inequality
\begin{align}
    -J_A^2\geq\mathbb{1}\,,
\end{align}
\ie $J_A^2$ has negative eigenvalues whose magnitude is at least $1$, but possibly larger. The magnitude of these eigenvalues or, put differently, the failure of $J_A$ to square to minus identity provides a measure of entanglement between $A$ and $B$ in the state $\ket{G}$. The formulas for $S_A(\ket{G})$ and $R_A(\ket{G})$ are then given by~\cite{sorkin1983entropy,Bianchi:2015fra,Hackl:2017ndi}
\begin{align}
    R_A(\ket{G})&=\frac{1}{2} \log \det{J_A}=\frac{1}{2} \log \frac{\det{G_A}}{\det{\Omega_A}}\,,\label{eq:RA-formula}\\ 
    S_A(\ket{G})&=\textrm{Tr}\left(\frac{\mathbbm{1}+\text{i} J_A}{2}\log\left|\frac{\mathbbm{1}+\text{i} J_A}{2}\right|\right)\,, \label{eq:SA-formula}
\end{align}
where the absolute value $|\cdot|$ in the second equation is meant in terms of eigenvalues. Note that in the basis from~\eqref{eq:xi-basis} leads to the standard form of $\Omega$ (and $\Omega_A$) from~\eqref{eq:commutator-Omega}, which leads to $\det\Omega_A=(-1)^{N_A}$.

\subsection{Theorems on entanglement asymptotics}
\label{sec:asymptotic-entanglement}
For classically unstable quadratic systems and Gaussian initial states, there exist several theorems that describe the asymptotic behavior of the entanglement entropy and the R\'enyi entropy for large times. For this, let us introduce the concept of \emph{Lyapunov exponents} and \emph{unstable quadratic systems}.

\begin{definition}[Lyapunov exponents]
Given a quadratic Hamiltonian system $H(t)=\frac{1}{2}h(t)_{ab}\xi^a\xi^b$ with linear Hamiltonian flow
\begin{align}
    M(t)=\mathcal{T}\exp\int_0^tdt' K(t')
\end{align}
(where $K(t)^a{}_b=\Omega^{ac}h(t)_{cb}$) on the classical phase space $V$, we define the Lyapunov exponent $\lambda_{\ell}$ of a linear observable $\ell\in V^*$ as the limit
\begin{align}
    \lambda_{\ell}=\lim_{t\to\infty}\log\frac{\lVert M^\intercal(t)\ell\rVert}{\lVert \ell\rVert}\,,
\end{align}
where the norm $\lVert\cdot\rVert$ can be induced by any positive definite inner product on the dual phase space. Note that the action on the dual phase space is given by the transpose $M^\intercal(t)_a{}^b=M(t)^b{}_a$.
\end{definition}

\begin{definition}[Unstable quadratic system]
A classical quadratic Hamiltonian $H(t)=\frac{1}{2}h(t)_{ab}\xi^a\xi^b$ is called \emph{unstable} if there exists at least one linear observable $\ell\in V^*$ with positive Lyapunov exponent $\lambda_\ell>0$. There are at most $2N$ distinct Lyapunov exponents and we can always find a symplectic basis $\mathcal{D}_L=(\ell_1,\cdots,\ell_{2N})$ with associated Lyapunov exponents $\lambda_i:=\lambda_{\ell_i}$ satisfying
\begin{align}
    \lambda_1\geq\cdots\geq\lambda_{N}\geq 0\geq \lambda_{N+1}\geq\cdots\geq \lambda_{2N}
\end{align}
and $\lambda_{2N+1-i}=-\lambda_i$, \ie all Lyapunov exponents come in conjugate pairs.
\end{definition}
\begin{proof}
The proof can be found in definition 2 of appendix~A.2 in~\cite{Bianchi:2017kgb}.
\end{proof}

With these definitions at hand, we can now review the key theorems that we will use in subsequent sections to study dynamics of R\'enyi entropy (of order 2) and von Neumann entanglement entropy.

\begin{theorem}[R\'enyi entropy as phase space volume]\label{th:RAvolume}
The R\'enyi entropy (of order $2$) $R_A(\ket{G})=R_A^{(2)}(\ket{G})$ is given by
\begin{align}
    R_A(\ket{G})=\log\mathrm{Vol}_G\mathcal{V}_A\,,
\end{align}
where $\mathcal{V}_A\subset A^*\subset V^*$ represents an arbitrary parallelepiped of symplectic volume $1$.
\end{theorem}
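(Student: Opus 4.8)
The plan is to compute $R_A(\ket{G})$ from the formula $R_A(\ket{G})=\frac{1}{2}\log(\det G_A/\det\Omega_A)$ established in~\eqref{eq:RA-formula}, and to reinterpret the ratio of determinants as a ratio of volumes. The key observation is that $\det G_A$ and $\det\Omega_A$ are both determinants of bilinear forms on the same space $A^*$, so their ratio is basis-independent and can be read as a ratio of the volume forms they induce. Concretely, a bilinear form $b$ on a $2N_A$-dimensional space defines a density (a volume form up to sign) $\sqrt{|\det b|}\,d^{2N_A}x$ in any coordinate system; the ratio $\sqrt{\det G_A/\det\Omega_A}$ is then exactly the $G_A$-volume of any region whose $\Omega_A$-volume equals $1$. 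I would phrase this as: fix a parallelepiped $\mathcal{V}_A\subset A^*$ spanned by vectors $\ell_1,\dots,\ell_{2N_A}$, and note $\mathrm{Vol}_{\Omega_A}\mathcal{V}_A=\sqrt{|\det(\Omega_A(\ell_i,\ell_j))|}$ while $\mathrm{Vol}_{G_A}\mathcal{V}_A=\sqrt{\det(G_A(\ell_i,\ell_j))}$ (the latter positive since $G_A$ is positive definite). Taking the spanning vectors to form a symplectic basis makes $\mathrm{Vol}_{\Omega_A}\mathcal{V}_A=1$, and then $\mathrm{Vol}_{G_A}\mathcal{V}_A=\sqrt{\det G_A/|\det\Omega_A|}$, whence $\log\mathrm{Vol}_{G_A}\mathcal{V}_A=\frac{1}{2}\log(\det G_A/\det\Omega_A)=R_A(\ket{G})$ using that $\det\Omega_A=(-1)^{N_A}$ has unit magnitude.

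The steps, in order: (i) recall~\eqref{eq:RA-formula} and reduce the claim to showing $\mathrm{Vol}_{G_A}\mathcal{V}_A=\sqrt{\det G_A/\det\Omega_A}$ for any symplectic-volume-$1$ parallelepiped; (ii) spell out how a (possibly indefinite) bilinear form induces a volume/density, so that $\mathrm{Vol}_b(\mathcal{V})$ for $\mathcal{V}$ spanned by $\ell_1,\dots,\ell_{2N_A}$ is $\sqrt{|\det(b(\ell_i,\ell_j))|}$, and check this is independent of the spanning set up to the obvious Jacobian factor—so ``symplectic volume $1$'' is an unambiguous normalization; (iii) choose the $\ell_i$ to be a symplectic (Darboux) basis of $A^*$ with respect to $\Omega_A$, giving $\det(\Omega_A(\ell_i,\ell_j))=\pm1$; (iv) evaluate $\det(G_A(\ell_i,\ell_j))$ in that same basis: since $G_A$ is the matrix of the quadratic form in this basis, this is just $\det G_A$ as a matrix, and it is positive because $G_A>0$ (the restriction of a positive-definite covariance form); (v) assemble the pieces and take the logarithm. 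A remark that the answer is independent of the choice of $\mathcal{V}_A$ is worth including, and follows because changing the parallelepiped by a symplectic (volume-preserving) map leaves both volumes invariant.

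The main subtlety—not really an obstacle but the point requiring care—is conceptual rather than computational: making precise that $\det G_A$ (a coordinate-dependent number) and $\mathrm{Vol}_{G_A}\mathcal{V}_A$ (a coordinate-independent number, once the reference parallelepiped is fixed by the $\Omega_A$-normalization) agree. This hinges on the fact that $G$ and $\Omega$ are both bilinear forms on $V^*$ with the \emph{same} transformation law under basis changes, so while neither $\det G_A$ nor $\det\Omega_A$ is individually invariant, their ratio is, and it is precisely this ratio that the volume construction extracts. One should also note the innocuous sign/complexification issue flagged in the excerpt: the basis $\hat{\xi}^a$ is complex, so $\Omega_A$ has $\det\Omega_A=(-1)^{N_A}$ and $G_A$ likewise lives on the complexified space; since we only need $|\det\Omega_A|=1$ and the positivity of $\det G_A$, the formula $R_A=\log\mathrm{Vol}_G\mathcal{V}_A$ comes out real and unambiguous. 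The remaining verification that $\rho_A$ is Gaussian and fully encoded by $G_A$ has already been granted in the preceding subsection, so nothing further is needed there.
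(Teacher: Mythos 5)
Your argument is correct and is essentially the proof the paper relies on: the theorem is quoted from Sec.~6.2 of \cite{Bianchi:2017kgb}, where the identity is obtained exactly as you propose, by reading the basis-independent ratio $\det G_A/\det\Omega_A$ from Eq.~\eqref{eq:RA-formula} as the squared $G$-volume of a parallelepiped normalized to unit symplectic volume (both Gram determinants rescale by $(\det S)^2$ under a change of spanning vectors, so the ratio, and hence the volume, is well defined). The one step to phrase carefully is your claim in (iv) that $\det(G_A(\ell_i,\ell_j))$ is positive "because $G_A>0$": this is literally true only in a real Darboux basis, whereas in the complex basis $\hat\xi^a$ used throughout the paper $\det G_A$ equals $(-1)^{N_A}$ times a positive number, but, as you note in your closing remark, this sign cancels against $\det\Omega_A=(-1)^{N_A}$, so the conclusion is unaffected.
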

\begin{proof}
The proof can be found in Sec.~6.2 of~\cite{Bianchi:2017kgb}.
\end{proof}

\begin{theorem}[Asymptotic entanglement production]\label{th:SAasymptotics}
We consider an \emph{unstable} and \emph{regular}\footnote{The word ``regular'' is used to exclude some pathological cases discussed in Appendix A.3 of~\cite{Bianchi:2017kgb}.} Hamiltonian system with $N$ degrees of freedom and quadratic Hamiltonian $\hat{H}(t)=\frac{1}{2}h(t)_{ab}\hat{\xi}^a\hat{\xi}^b$, whose Lyapunov exponents are
\begin{align}
    \lambda_1\geq\cdots\geq\lambda_{N}\geq 0\geq \lambda_{N+1}\geq\cdots\geq \lambda_{2N}\,.
\end{align}
An initial Gaussian state $\ket{G_0}$ will evolve into $\ket{G_t}$, where the time-dependent covariance matrix is given by
\begin{align}
    G_t=M(t)G_0M^{\intercal}(t)\,.
\end{align}
A system decomposition $V=A\oplus B$ of the classical phasespace $V$ (with $\dim A=2N_A$ and $\dim B=2N_B$) induces a tensor product decomposition $\mathcal{H}=\mathcal{H}_A\otimes\mathcal{H}_B$ that we can use to compute entanglement entropy $S_A(\ket{G_t})$. The asymptotics of the R\'enyi entropy $R_A(\ket{G_t})$ and the entanglement entropy $S_A(\ket{G_t})$ is given by
\begin{align}
    R_A(\ket{G_t})\sim S_A(\ket{G_t})\sim \Lambda_A\,t \, , \quad\text{with}\quad \Lambda_A=\sum^{2N_A}_{i=1}\lambda_i\,,
\end{align}
where we assume the subsystem to be generic, \ie above statement applies to all subsystems except a set of measure zero.
\end{theorem}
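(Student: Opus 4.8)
The plan is to prove the two asymptotic equivalences in turn. First I would show that $S_A(\ket{G})$ and $R_A(\ket{G})$ differ, for \emph{every} Gaussian state, by a quantity that is bounded uniformly in the state (by a constant depending only on $N_A$); this reduces the theorem to the statement $R_A(\ket{G_t})=\Lambda_A\,t+o(t)$. Then I would use Theorem~\ref{th:RAvolume} to rewrite $R_A(\ket{G_t})$ as the log of a phase-space volume that is transported by the Hamiltonian flow $M(t)$, and estimate the growth of that volume from the Lyapunov spectrum.

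For the reduction, diagonalise the restricted complex structure $J_A$ of $\rho_A$ into $2\times 2$ blocks with eigenvalues $\pm\ii\mu_k$, $k=1,\dots,N_A$, with $\mu_k\geq1$ from $-J_A^2\geq\mathbbm{1}$. Then~\eqref{eq:RA-formula} gives $R_A(\ket{G})=\sum_k\log\mu_k$, and~\eqref{eq:SA-formula} becomes $\sum_k s(\mu_k)$ with $s(\mu)=\tfrac{\mu+1}{2}\log\tfrac{\mu+1}{2}-\tfrac{\mu-1}{2}\log\tfrac{\mu-1}{2}$. Since $\mu\mapsto s(\mu)-\log\mu$ is continuous on $[1,\infty)$, vanishes at $\mu=1$, and tends to $1-\log2$ as $\mu\to\infty$, it is bounded, so $|S_A(\ket{G})-R_A(\ket{G})|\leq c_0$ with $c_0$ depending only on $N_A$. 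Hence it is enough to analyse $R_A$, and since $\Lambda_A>0$ for a nontrivial subsystem of an unstable system, the linear growth of $R_A$ transfers to $S_A$.

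For the volume estimate, fix once and for all a parallelepiped $\mathcal{V}_A\subset A^*$ of unit symplectic volume, with edge covectors $v_1,\dots,v_{2N_A}$. Applying Theorem~\ref{th:RAvolume} to $\ket{G_t}$ and using $G_t=M(t)G_0M^\intercal(t)$ together with the identity $G_t(v_i,v_j)=G_0(M^\intercal(t)v_i,M^\intercal(t)v_j)$ gives
\begin{align}
    R_A(\ket{G_t})=\tfrac12\log\big|\det\big[\,G_0\big(M^\intercal(t)v_i,M^\intercal(t)v_j\big)\,\big]\big|\,,
\end{align}
that is, the log-volume of the $2N_A$-plane $M^\intercal(t)\mathcal{V}_A\subset V^*$. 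Fix any positive-definite inner product on $V^*$ (its choice only shifts the answer by $O(1)$, just as the norm is immaterial in the definition of the Lyapunov exponents), and write the corresponding singular value decomposition $M^\intercal(t)=U(t)\Sigma(t)V^\intercal(t)$ with $\sigma_1(t)\geq\dots\geq\sigma_{2N}(t)$ and $\sigma_i(t)=e^{\lambda_i t+o(t)}$. Expanding the Gram determinant above by the Cauchy--Binet formula over $2N_A$-element index sets $S$, the term $S=\{1,\dots,2N_A\}$ carries the exponential factor $\prod_{i=1}^{2N_A}\sigma_i(t)^2$; granting that this term dominates, one obtains $R_A(\ket{G_t})=\sum_{i=1}^{2N_A}\log\sigma_i(t)+O(1)=\Lambda_A\,t+o(t)$. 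Combined with the reduction, this gives $R_A(\ket{G_t})\sim S_A(\ket{G_t})\sim\Lambda_A\,t$.

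The hard part is the ``granting that this term dominates'': the coefficient multiplying $\prod_{i=1}^{2N_A}\sigma_i(t)^2$ is the squared minor obtained by projecting $V^\intercal(t)v_1,\dots,V^\intercal(t)v_{2N_A}$ onto the top $2N_A$ coordinates, and the argument breaks down precisely when $\mathrm{span}(\mathcal{V}_A)$ aligns with the contracting right-singular directions of $M^\intercal(t)$. Because those singular subspaces themselves rotate with $t$, this is where the \emph{regular} hypothesis is essential: it guarantees, via an Oseledets-type statement, that the relevant flag of subspaces converges as $t\to\infty$, after which the \emph{generic subsystem} hypothesis can be made precise---the symplectic subspaces $A$ whose limiting flag is degenerately aligned form a proper analytic subvariety of the relevant Grassmannian, hence a set of measure zero, and for every other $A$ the dominating minor stays bounded away from $0$ and contributes only $O(1)$. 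A secondary bookkeeping point is the case of a degenerate Lyapunov spectrum (repeated exponents, or several index sets $S$ realising the maximal rate): one still checks, using the monotone ordering of the $\sigma_i$, that $\prod_{i=1}^{2N_A}\sigma_i(t)^2$ has the largest exponential rate among all Cauchy--Binet terms, and that summing the finitely many maximal-rate terms only rescales by a bounded factor.
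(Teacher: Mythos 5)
Your proposal is correct and follows essentially the same route as the proof the paper relies on (Secs.~2.1--2.2 and 6.2 of \cite{Bianchi:2017kgb}): reduce $S_A$ to $R_A$ via the uniform bound on $s(\mu)-\log\mu$ over $\mu\in[1,\infty)$, recast $R_A(\ket{G_t})$ through Theorem~\ref{th:RAvolume} as the logarithm of the $G_0$-volume of $M^\intercal(t)\mathcal{V}_A$, and read off $\Lambda_A=\sum_{i=1}^{2N_A}\lambda_i$ from the singular-value asymptotics, with regularity and genericity ensuring the dominant Cauchy--Binet minor stays bounded away from zero. The paper itself only cites that reference for the proof, so your sketch supplies precisely the argument being invoked, including a correct identification of where the \emph{regular} and \emph{generic} hypotheses enter.
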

\begin{proof}
A detailed proof can be found in Sec.~2.1 and~2.2 of~\cite{Bianchi:2017kgb}.
\end{proof}

Next, we will compare the rate of entanglement entropy production in the dynamical Casimir effect with these predictions.

\section{Entanglement production at parametric resonance}\label{sec:Ent.Prod}
In this section, we compute the time-dependent R\'enyi and entanglement entropies at resonance. In $(1+1)$-D, the normal modes are strongly coupled by the evolution and become highly entangled. We compute the entropies explicitly for the subsystem associated with the resonant mode. In higher dimensions, the modes decouple under the assumed approximations, and the normal modes are not entangled. In this case, we consider a generic subsystem that intersects nontrivially the resonant mode.

\subsection{(1+1) dimensions}
Let the scalar field be initially in its vacuum state $\ket{0,in}$. We work in the Heisenberg representation, so that the state of the system is constant and the time evolution is encoded in the observables. Annihilation and creation operators $\hat{a}_k(t),\hat{a}_k^\dagger(t)$ for a mode $k$ at each time $t$ are defined in terms of the time-dependent Bogoliubov coefficients as in Eq.~\eqref{eq:B-transf-complex} :
\begin{equation} \label{eq:t-Bogoliubov}
    \hat{a}_k(t) = \sum_n \left[ \alpha_{nk}(t) \, \hat{b}_n + \beta_{nk}^*(t) \, \hat{b}_n^\dagger  \right] \, .
\end{equation}
Such operators are associated with the instantaneous basis at $t$. It is convenient to use the variable $\tau=\epsilon \omega t/2$ to compute the evolution of the entropies.

The covariance matrix at time $\tau$ is given by
\begin{align} \label{eq:covariance-matrix}
    G^{ab}(\tau)=\bra{0,in} ( \hat{\xi}^a\hat{\xi}^b+\hat{\xi}^b\hat{\xi}^a ) \ket{0,in}\,,
\end{align}
where $\hat{\xi}^a = (\hat{a}_k(\tau),\hat{a}_k^\dagger(\tau))$ in the complex representation. The restriction of the covariance matrix to a subsystem $A$ defines the reduced covariance matrix $G_A$ associated with it. The R\'enyi entropy $R_A(\ket{G})$ can be computed from $G_A$ as
\begin{equation}
	R_A = \frac{1}{2} \log (-\det G_A) \,,
\end{equation}
where we used Eq.~\eqref{eq:RA-formula} and the fact that $\det(\Omega_A)=-1$ for a single degree of freedom ($N_A=1$). Similarly, the entanglement entropy is given by
\begin{equation}
	S_A(\ket{G})=\textrm{Tr}\left(\frac{\mathbbm{1}+\text{i} J_A}{2}\log\left|\frac{\mathbbm{1}+\text{i} J_A}{2}\right|\right)
\end{equation}
With $J_A=-G_A\,\omega^{-1}_A$ as explained in Eq.~\eqref{eq:SA-formula}. We will now compute these quantities.

Expressing $\hat{\xi}^a$ in terms of creation and annihilation $in$-operators $\hat{b}_m^\dagger$ and $\hat{b}_m$ using Eq.~\eqref{eq:t-Bogoliubov}, and representing the covariance matrix as
\begin{align}
	G_A(\tau)  = \begin{pmatrix}
	G_A^{11} & G_A^{12}\\
	G_A^{21} & G_A^{22}
	\end{pmatrix}\,,
\end{align}
we find for the matrix components
\begin{align} \label{eq:G-alpha-beta}
	G_A^{11} &= \sum_n 2\alpha_{n1}(\tau)\beta_{n1}^*(\tau) \nonumber \\
	G_A^{12} = G_A^{21} &= \sum_n \alpha_{n1}(\tau)\alpha_{n1}^*(\tau)+\beta_{n1}(\tau)\beta^*_{n1}(\tau) \nonumber \\
	G_A^{22} &= \sum_n 2\beta_{n1}(\tau)\alpha^*_{n1}(\tau) \, .
\end{align}
From now on, we will omit the arguments of the time-dependent Bogoliubov coefficients for conciseness.

Since $\alpha_{nm}$ and $\beta_{nm}$ are real for any $m$ and $n$, we have
\begin{align}
    G_A^{11}=G_A^{22} &= \sum_n 2\alpha_{n1}\beta_{n1} \\
    G_A^{12}=G_A^{21} &= \sum_n \left(\alpha_{n1}^2+\beta_{n1}^2 \right) \,.
\end{align}
Differentiating with respect to $\tau$, these sums can be evaluated using Eqs.~\eqref{eq:sum-alpha-dot-alpha}--\eqref{eq:double-sum-beta-dot-beta} and then integrated analytically.

Let us start with the term $G_A^{11}$. Taking its derivative with respect to $\tau$ yields
\begin{align}
    \frac{dG_A^{11}}{d\tau}&=\sum_n 2 \left( \Dot{\alpha}_{n1} \beta_{n1} + \alpha_{n1} \Dot{\beta}_{n1} \right)\,.
\end{align}
Then, according to Eq.~\eqref{eq:sum-alpha-dot-beta}, we have
\begin{align}\label{eq:g11-edo}
     \frac{dG_A^{11}}{d\tau}&= -2 \left( \alpha_{11}^2 + \beta_{11}^2  \right)\,.
\end{align}
Similarly, taking the derivative of the term $G^{12}_A$, we find:
\begin{align}
     \frac{dG_A^{12}}{d\tau}&=\sum_n 2 \left( \alpha_{n1} \Dot{\alpha}_{n1} + \beta_{n1} \Dot{\beta}_{n1} \right)\,.
\end{align}
From Eq.~\eqref{eq:sum-alpha-dot-alpha}, we conclude that
\begin{align}\label{eq:g12-edo}
    \frac{dG_A^{12}}{d\tau}&=-4 \alpha_{11} \beta_{11} \, .
\end{align}
Since $G^{11}_A=G^{22}_A$ and $G^{12}_A=G^{21}_A$, the only remaining step is to integrate Eqs.~\eqref{eq:g11-edo} and \eqref{eq:g12-edo}.

The explicit form of the Bogoliubov coefficients $\alpha_{11}$ and $\beta_{11}$ is presented in Eq.~\eqref{eq:lowest-coefficients}. Inserting these formulas in Eq.~\eqref{eq:g11-edo} we find
\begin{equation}\label{eq:g11-edo-2}
    \frac{dG_A^{11}}{d\tau}=-\frac{16\left\{[E(\kappa)-\Tilde{\kappa}^2K(\kappa)]^2+\Tilde{\kappa}^2[E(\kappa)-K(\kappa)]^2\right\}}{\kappa^4\pi^2}\,,
\end{equation}
where the complete elliptic integrals of the first and second kind, $K(\kappa)$ and $E(\kappa)$, are defined in Eqs.~\eqref{eq:elliptic E} and \eqref{eq:elliptic K}, and $\kappa$ and $\Tilde{\kappa}$ are given by Eq.~\eqref{eq:kappas}.

From the definitions of $\kappa$ and $\Tilde{\kappa}$, we can easily check that
\begin{equation}
    d\tau=\frac{\kappa d\kappa}{4\Tilde{\kappa}^2} \, .
\end{equation}
Substituting this relation in \eqref{eq:g11-edo-2}, we can integrate it analytically (see \cite{Dodonov:1996zz}), obtaining:
\begin{equation} \label{eq:g11-int-cte}
    G_A^{11}=-\frac{4}{\pi^2\kappa^2}\left[E(\kappa)-\Tilde{\kappa}^2K(\kappa)\right]\left[K(\kappa)-E(\kappa)\right]+C_1,
\end{equation}
where $C_1$ is a constant of integration to be determined from the initial conditions.

In a similar way, substituting \eqref{eq:lowest-coefficients}
in Eq.~\eqref{eq:g12-edo}, we find
\begin{equation}
    \frac{dG_A^{12}}{d\tau}=\frac{16}{\pi^2 \kappa^2} [E^2(\kappa)-\Tilde{\kappa}^2K^2(\kappa)].
\end{equation}
This equation can again be integrated \cite{Dodonov:1996zz}, leading to
\begin{equation} \label{eq:g12-int-cte}
    G_A^{12}=\frac{4}{\pi^2}E(\kappa)K(\kappa)+C_2\,,
\end{equation}
where $C_2$ is a constant to be determined from the initial conditions.

To fix the integration constants, we need to evaluate the expressions in Eqs.~\eqref{eq:g11-int-cte} and \eqref{eq:g11-int-cte} at $\tau=0$ and compare them with the components of the initial covariance matrix $G_A(0)$. But at $\tau=0$, the Bogoliubov transformation is trivial, $\alpha_{mn}(0)=\delta_{mn},\beta_{mn}(0)=0$. Hence, the initial covariance matrix is given by
\begin{equation}
    G_A(0)=\left(\begin{array}{cc}
    0     & 1 \\
    1     &0 
    \end{array}\right).
\end{equation}
Now, in the small-time limit $\tau\ll 1$, the elliptic integrals can be expanded as~\cite{gradshteyn2014table}
\begin{align}
    K(\kappa)&=\frac{\pi}{2}\left\{1+\frac{1}{4}\kappa^2+\frac{9}{64}\kappa^4 + \cdots \right\}\,,\\
    E(\kappa)&=\frac{\pi}{2}\left\{1-\frac{1}{4}\kappa^2-\frac{3}{64}\kappa^4- \cdots \right\}\,.
\end{align}
In addition, in this limit we see that $\kappa\to 0$ and $\Tilde{\kappa}\to 1$. Keeping only terms up to first order in $\tau$, we find the matrix elements
\begin{align}
    G_A^{11} &= G_A^{22}\sim-2\tau+C_1\,, \\
    G_A^{12} &= G_A^{21}\sim 1+C_2\,.
\end{align}
Hence, the integration constants are $C_1=C_2=0$, and the matrix $G_A(\tau)$ assumes the form
\begin{widetext}
\begin{equation}
    G_A(\tau)=\left(\begin{array}{cc}
     -\frac{4}{\pi^2\kappa^2}\left[E(\kappa)-\Tilde{\kappa}^2K(\kappa)\right]\left[K(\kappa)-E(\kappa)\right]    &\frac{4}{\pi^2}E(\kappa)K(\kappa)  \\
      \frac{4}{\pi^2}E(\kappa)K(\kappa)   & -\frac{4}{\pi^2\kappa^2}\left[E(\kappa)-\Tilde{\kappa}^2K(\kappa)\right]\left[K(\kappa)-E(\kappa)\right]
    \end{array}\right) \, .
\end{equation}
\end{widetext}

Furthermore, we can analyze the long-time behavior of the R\'enyi entropy. According to \cite{gradshteyn2014table}, when $\tau\gg 1$, the leading terms of the asymptotic expansion of the elliptic integrals are
\begin{align}
    K(\kappa)&\sim\log\frac{4}{\Tilde{\kappa}}+\frac{1}{4}\left(\log\frac{4}{\Tilde{\kappa}}-1\right)\Tilde{\kappa}^2+ \cdots\\
    E(\kappa)&\sim1+\frac{1}{2}\left(\log\frac{4}{\Tilde{\kappa}}-\frac{1}{2}\right)\Tilde{\kappa}^2+ \cdots
\end{align}
Therefore, the covariance matrix becomes
\begin{equation}
    G_A(\tau)=\frac{4}{\pi^2}\left(\begin{array}{cc}
      1-\zeta   &\zeta  \\
    \zeta     &1-\zeta 
    \end{array}\right),
\end{equation}
where $\zeta\equiv\log 4 + 4\tau$. We find that, in the asymptotic limit, the R\'enyi entropy is
\begin{align}
    R_A(\tau)&\sim\frac{1}{2}\log\left(\frac{16(8\tau+\log{16}-1)}{\pi^4}\right) \label{eq:RAasymp1}\\
    & \sim\frac{1}{2}\log\frac{128}{\pi^4}+\frac{1}{2}\log \tau\,.\label{eq:RAasymp2}
\end{align}
The R\'enyi entropy converges fast to its asymptotic behavior, as shown in Fig.~\ref{fig1}. In particular, we see in Fig.~\ref{fig2} that the first asymptotic expansion from Eq.~\eqref{eq:RAasymp1} approaches the exact solution exponentially fast, while the leading behavior from Eq.~\eqref{eq:RAasymp2} keeps a finite offset.

\begin{figure}[b!]
\centering
  \begin{tikzpicture}
  \draw (0,0) node[inner sep=0pt]{\includegraphics[width=\linewidth]{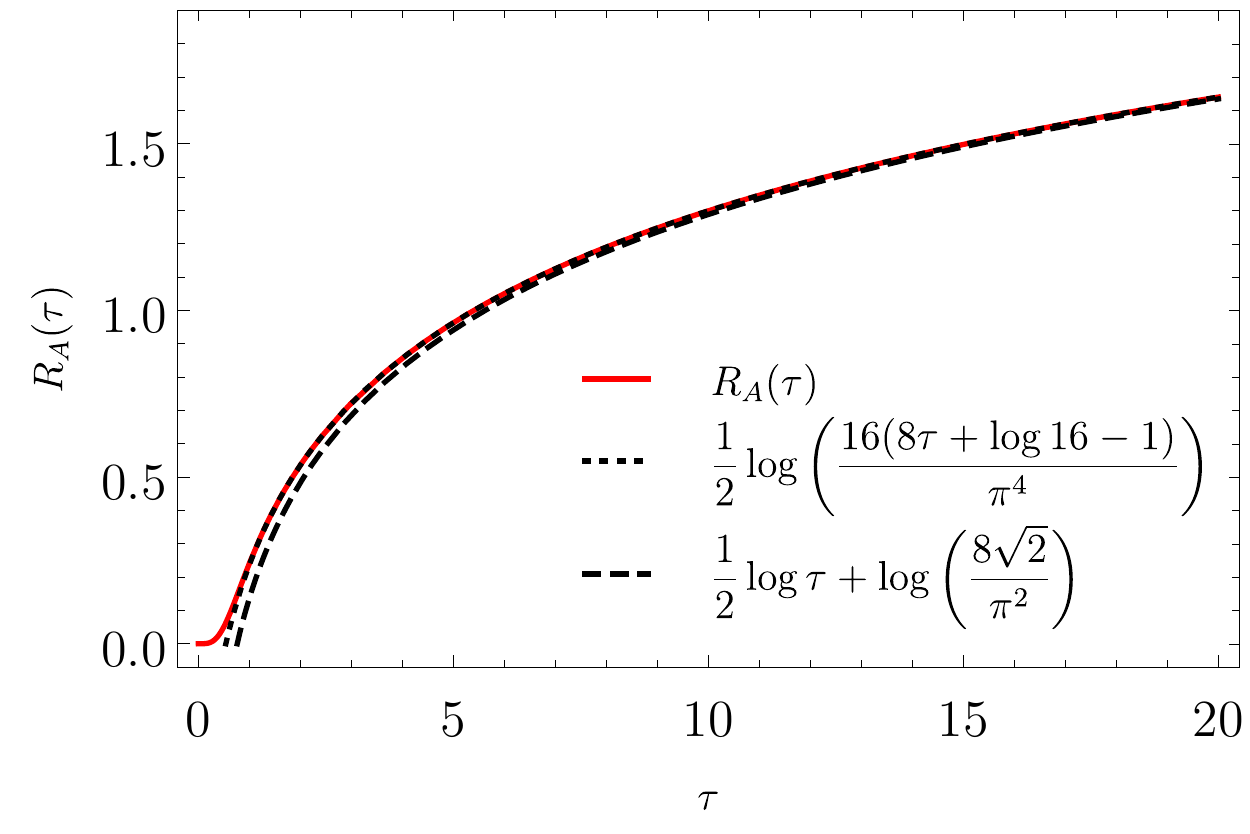}};
  \end{tikzpicture}
  \caption{Comparison of the exact time-dependent R\'enyi entropy $R_A(\tau)$ of a resonant mode in $(1+1)$ dimensions with its asymptotic expansions given in Eqs.~\eqref{eq:RAasymp1} and~\eqref{eq:RAasymp2}.}
  \label{fig1}
\end{figure}

\begin{figure}[t!]
\centering
  \begin{tikzpicture}
  \draw (0,0) node[inner sep=0pt]{\includegraphics[width=\linewidth]{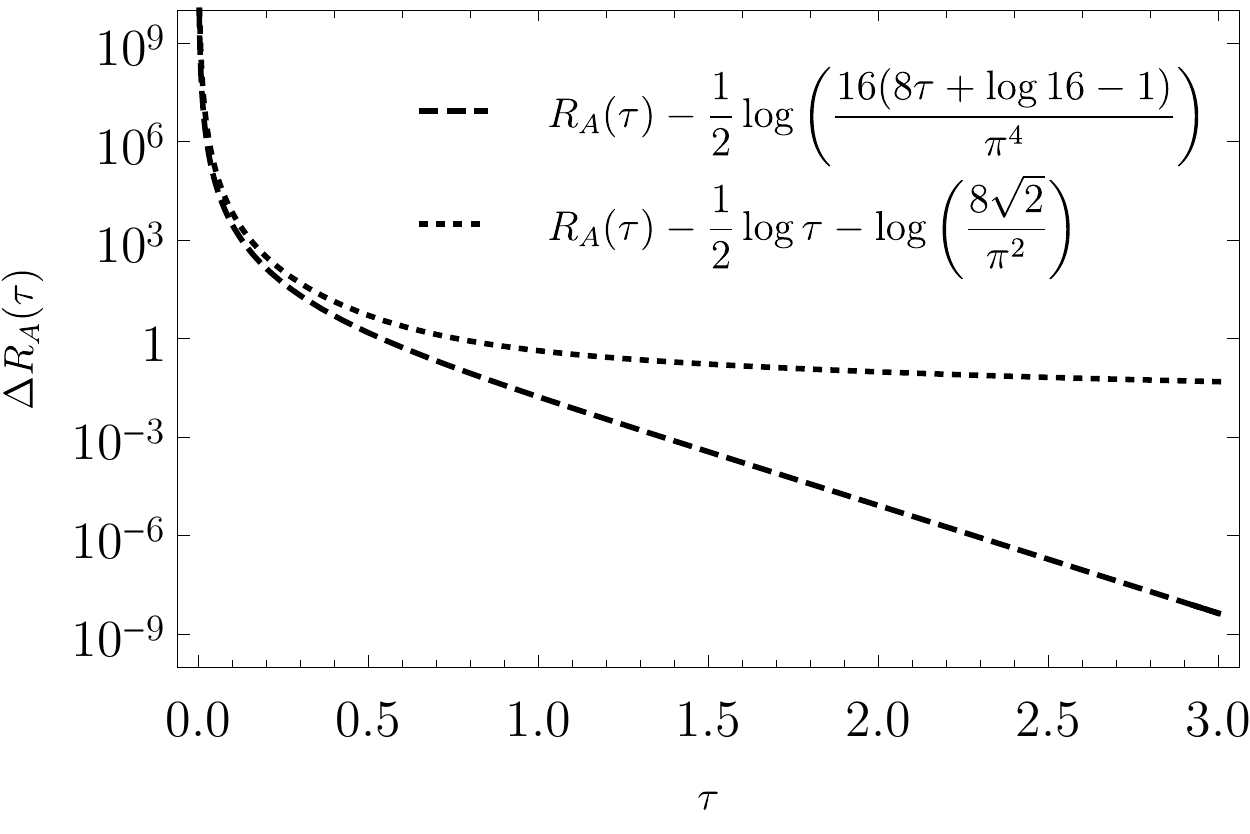}};
  \end{tikzpicture}
  \caption{Difference between the exact R\'enyi entropy $R_A(\tau)$ of a resonant mode in $(1+1)$ dimensions and the asymptotic expansions given in Eqs.~\eqref{eq:RAasymp1} and~\eqref{eq:RAasymp2}.}
  \label{fig2}
\end{figure}

For a single degree of freedom, the relation between the R\'enyi entropy $R_A$ and the entanglement entropy $S_A$ is given by
\begin{align}\label{EntRenyi}
    S_A(\tau)=s\left(e^{R_A(\tau)}\right)\, ,
\end{align}
with
\begin{equation}
    s(x)=\left(\frac{x+1}{2}\right)\log\left(\frac{x+1}{2}\right)-\left(\frac{x-1}{2}\right)\log\left(\frac{x-1}{2}\right) \, .
\end{equation}
This function behaves as
\begin{equation}
    s(x) \sim\log x + (1- \log 2)\quad\text{as}\quad x\to\infty\,.
\end{equation}
Hence, for large $x$, we have
\begin{equation} \label{eq:S-R-asymptotic}
    S_A(\tau) \sim R_A(\tau) + (1-\log 2)\quad\text{as}\quad x\tau\to\infty \, .
\end{equation}
Combining Eqs.~\eqref{eq:RAasymp2} and \eqref{eq:S-R-asymptotic}, we obtain the asymptotic behavior of the entanglement entropy given by
\begin{equation} \label{eq:SA-asymp}
    S_A(\tau) \sim 1+\frac{1}{2}\log\frac{32}{\pi^4}+\frac{1}{2}\log \tau \,,
\end{equation}
where we find the constant to be $1+\frac{1}{2}\log{32/\pi^2}\approx 0.44$.

\begin{figure}[b]
\centering
  \begin{tikzpicture}
  \draw (0,0) node[inner sep=0pt]{\includegraphics[width=\linewidth]{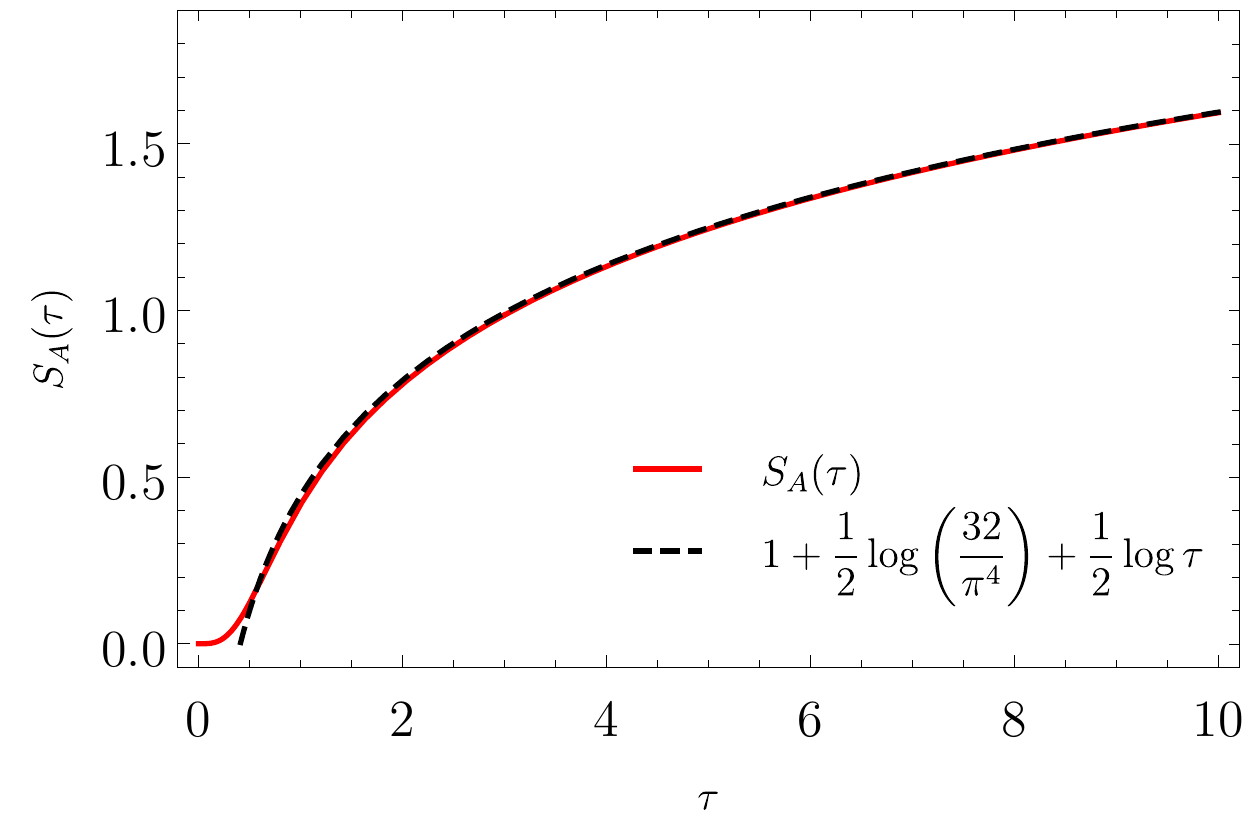}};
  \end{tikzpicture}
  \caption{Comparison of the time-dependent entanglement entropy $S_A(\tau)$ of a resonant mode in $(1+1)$ dimensions with its asymptotic expansion given in Eq.~\eqref{eq:SA-asymp}.}
  \label{fig3}
\end{figure}

In short, we found that in one spatial dimension, the asymptotic growth of the R\'enyi and entanglement entropies is logarithmic. As discussed in Section \ref{sec:quadratic-H}, to any system with a periodic quadratic Hamiltonian there is an associated effective time-independent Hamiltonian that describes its stroboscopic evolution. Moreover, if such an effective Hamiltonian displays a Lyapunov instability, then the asymptotic growth of the entropy is linear, according to the general results summarized in Section \ref{sec:asymptotic-entanglement}. Therefore, we conclude that for $d=1$ the DCE displays no Lyapunov instability, despite the presence of parametric resonance. As stressed in \cite{Dodonov:1996zz}, the case of $d=1$ is special in the sense that, after the averaging of the fast oscillations, the field modes remain strongly coupled. The intermode interactions compete with the effect of parametric resonance, keeping the production of particles in the resonant mode linear, in opposition to what happens in higher dimensions, where it grows  exponentially. Our result show that this suppression of particle creation is reflected in the absence of a Lyapunov instability for the stroboscopic evolution. 

The possible asymptotic behaviors of the entanglement entropy for systems with time-independent quadratic Hamiltonians and a finite number of degrees of freedom was fully analyzed in \cite{Hackl:2017ndi}. In the presence of instabilities, the asymptotic growth is linear, in agreement with the previous results obtained in \cite{Bianchi:2017kgb}, discussed in Section \ref{sec:asymptotic-entanglement}. In the absence of instabilities, the entanglement entropy either grows logarithmically, for so-called metastable systems, or is bounded. Hence, the DCE in $d=1$ behaves qualitatively as a metastable system, but there is an important difference. In \cite{Hackl:2017ndi}, it was proved that the prefactor multiplying the logarithmic function must be an integer, \ie $S_A(t) \sim n \log t$, $n \in \mathbb{N}$. In contrast, we found $S_A(t) \sim 1/2 \log t$, according to Eq.~\eqref{eq:SA-asymp}. This provides evidence that the extension of the results of \cite{Hackl:2017ndi} to systems with an infinite number of degrees of freedom is nontrivial, with new classes of asymptotic behavior arising in this limit. In fact, in preliminary numerical investigations, we found that, by truncating the number of degrees of freedom in the one dimensional DCE, the entanglement entropy becomes bounded, with an initial regime of logarithmic growth. As the size of the truncated system is increased, the regime of logarithmic growth becomes longer, consistently with the fact that for the infinite system such a regime has an infinite duration. This mechanism illustrates how a new class of logarithmic asymptotic growth can arise in the nontrivial limit of infinite number of degrees of freedom and may be investigated elsewhere.

\subsection{(2+1) dimensions} \label{sec:2+1-entropy}

We now proceed to the case of $2$ dimensions. The frequencies of the normal modes are determined by Eq.~\eqref{eq:omega-d}, where $k$ is now a set of two integers. We let the moving mirror oscillate with twice the frequency $\omega_r=\omega_{11}$ of the first mode $r=(1,1)$,
\begin{equation}
    L(t)=L_1[1-\epsilon\cos(2\omega_r t)],
\end{equation}
so that the resonant mode is $r$. The length of the cavity along the $x^2$-direction is kept fixed. The resonant frequency is
\begin{equation}
    \omega_r = \pi \sqrt{\frac{1}{L_1^2}+ \frac{1}{L_2^2}} \, .
\end{equation}

The R\'{e}nyi entropy can be computed from Eq.~\eqref{eq:RA-formula} as before. The main difference is that, as discussed in section \ref{sec:d+1-dim}, the evolution of the system is now much simpler. Under the slow-variation approximation, all normal modes decouple, and the evolution is nontrivial only for the resonant mode. The problem is thus reduced to that of a one dimensional parametric oscillator at resonance. The nontrivial Bogoliubov coefficients are given by Eq.~\eqref{eq:resonant-Bogoliubov-2d}
\begin{align}\label{Bogcoeff}
    \alpha_{11}(t)&=\cosh(\omega_r \gamma t),\qquad \nonumber \\ \beta_{11}(t)&=-i\sinh(\omega_r \gamma t)\,,
\end{align}
where $\gamma$, defined in \eqref{eq:gamma}, now reads
\begin{equation} \label{eq:gamma-2p1-d}
    \gamma= \frac{\epsilon \pi^2}{2 L_1^2 \omega_r^2 } = \frac{\epsilon}{2} \frac{L_2^2}{L_1^2 + L_2^2}  \, .
\end{equation}

We start again from the covariance matrix in order to compute the R\'{e}nyi entropy. The sums in the formulas \eqref{eq:G-alpha-beta} for the matrix components for each mode are now easily computed, since only the first term of each sum is nonzero. For the resonant mode $r=(1,1)$, we obtain
\begin{equation}\label{eq:2+1-Cov2x2}
    G_r(t)=\left(\begin{array}{cc}
    G_r^{11} &G_r^{12}  \\
    G_r^{21}    & G_r^{22}
    \end{array}\right) \, ,
\end{equation}
with
\begin{align}
    G_r^{11} &= - G_1^{11} = 2i\cosh(\omega_r\gamma t) \, , \\
    G_r^{12} &= G_1^{21} = \cosh^2(\omega_r \gamma t)+\sinh^2(\omega_r \gamma t) \, .
\end{align}
For the remaining modes, the covariance matrix keeps its initial form throughout the evolution,
\begin{equation}
    G_k(t)=\left(\begin{array}{cc}
    0 & 1  \\
    1 & 0
    \end{array}\right) \, , \quad k \neq (1,1) \, .
\end{equation}

In order to compute the entanglement entropy via Eq.~\eqref{eq:SA-formula}, we need to project the covariance matrix to a subspace of interest. Here, a special care must be taken so that a nontrivial evolution is obtained. Since distinct modes evolve independently, if we choose the subsystem to be a single mode, then the entropy will be zero. In order for the entropy to be nonzero, we should consider subspaces that intersect both the resonant mode and its complement.

 As a simple example, we study a mixture of two modes. In order to define the subsystem $A$ of interest, let us first introduce an extended subsystem  $E$ including the resonant mode and some other (arbitrary) mode $s$. The only mode with a nontrivial evolution is the resonant mode; for all others, the covariance matrix keeps its initial form. As a result, regardless of which mode $s$ one chooses to build the subsystem $E$, the projection of the covariance matrix to $E$ will always have the form
\begin{widetext}
\begin{equation}
    G_{E}(t)=\left(\begin{array}{cccc}
     2i\cosh(\omega_r\gamma t)    &0  &\cosh^2(\omega_r \gamma t)+\sinh^2(\omega_r \gamma t) &0\\
      0   &0  & 0&1\\
     \cosh^2(\omega_r \gamma t)+\sinh^2(\omega_r \gamma t)    &0  &-2i\cosh(\omega_r\gamma t) &0\\
       0  &1  &0 &0
    \end{array}\right).
\end{equation}
\end{widetext}

We now define a new set of annihilation operators
\begin{align}
    \hat{\Tilde{a}}_1=\frac{\hat{a}_1+\hat{a}_s}{\sqrt{2}},\qquad \hat{\Tilde{a}}_s=\frac{\hat{a}_1-\hat{a}_s}{\sqrt{2}}
\end{align}
and their conjugates $\hat{\tilde{a}}^\dagger_1$ and $\hat{\tilde{a}}^\dagger_s$. This symplectic transformation defines a new basis on the subsystem $E$. The transformation matrix $B$ that connects the new and the old basis is
\begin{equation}
    B=\frac{1}{\sqrt{2}}\left(\begin{array}{cccc}
        1 &1&0& 0 \\
        1 &-1&0&0\\
        0&0&1&1\\
        0&0&1&-1
    \end{array}\right).
\end{equation}
In order to mix the modes and find a nonzero entropy, we apply this transformation to $G_{E}$
\begin{equation}
    \Tilde{G}_{E}(t)=BG_{E}B^{-1},
\end{equation}
and then restrict to the subspace spanned by the first transformed mode, which gives the $2\times2$ restricted covariance matrix in the transformed basis, which we write as
\begin{equation}
    \Tilde{G}_A(t)=\left(\begin{array}{cc}
     \frac{1}{2}i\sinh(2\omega_r\gamma t)    &\cosh^2(\omega_r\gamma t)  \\
    \cosh^2(\omega_r\gamma t)     & -\frac{1}{2}i\sinh(2\omega_r\gamma t)
    \end{array}\right).
\end{equation}
It is now simple to compute the R\'{e}nyi entropy as
\begin{align}\label{Renyi2+1}
    R_{A}(t)&=\frac{1}{2}\log\left(-\det(\Tilde{G}_{A})\right) \nonumber \\
    &=\frac{1}{2}\log\left(\cosh^2(\omega_r \gamma t)\right).
\end{align}
In the limit of $t\gg 1$, the R\'{e}nyi entropy becomes
\begin{equation}\label{Renyi3d}
    R_A(t)\sim - \log 2 + \omega_r \gamma t \, .
\end{equation}

From Eq.~\eqref{eq:S-R-asymptotic}, the entanglement entropy has the same asymptotic behavior, differing only by an offset
\begin{equation} \label{eq:asympotic-ee-2d}
    S_A(t) \sim 1 - 2 \log 2 +\omega_r \gamma t \, .	
\end{equation}
Hence, entropy is produced at a constant rate for large times, with a rate determined by the frequency $\omega_r$ of the resonant mode, the amplitude of oscillation $\epsilon$ of the mirror, and the dimensions $L_1, L_2$ of the cavity, via the parameter $\gamma$. We see in Fig.~\ref{fig4} that, for large $t$, Eq.~\eqref{eq:asympotic-ee-2d} gives the same asymptotic behavior of the exact entanglement entropy computed from Eqs.~\eqref{EntRenyi} and \eqref{Renyi2+1}. 

This result can be interpreted as signaling the presence of an instability in the system, associated with the fast amplification of excitations in the resonant mode. As discussed in \cite{Bianchi:2017kgb}, in any system displaying Lyapunov instabilities, the asymptotic production of entropy for a generic subsystem takes place at a constant rate $\Lambda_A$, which is determined by the Lyapunov exponents of the system. We will now show that $\Lambda_A = \omega_r \gamma$ in the present case, in agreement with Eq.~\eqref{eq:asympotic-ee-2d}.

The asymptotic rate of entanglement entropy production $\Lambda_A$ for a generic subsystem with $N_A$ degrees of freedom is given by the sum of the largest $2N_A$ Lyapunov exponents \cite{Bianchi:2017kgb}. Since we are dealing with a single degree of freedom, $N_A=1$ here, and $\Lambda_A$ is the sum of the two largest Lyapunov exponents. We will later give a precise characterization of such generic subsystems.

The Lyapunov exponents can be computed from the Hamiltonian flow $M(t)$ of the system. In the special case where $M(t)=\exp(K t)$, they are simply the eigenvalues of the time-independent symplectic generator $K$. In our case, the generator decomposes into a series of $2\times 2$ blocks $K_k$ associated with the independent modes $k$, as discussed in Section \ref{sec:d+1-dim}. The evolution of the nonresonant modes is trivial, $K_k = 0$, for $k \neq r$, and the associated Lyapunov exponents vanish. The symplectic generator of the resonant mode is given in Eq.~\eqref{eq:generator-r-mode-2p1}. The Lyapunov exponents are the eigenvalues
\begin{equation} \label{eq:lyapunov-exponents}
    \lambda^{(r)}_1 = \omega_r \gamma \, , \qquad \lambda^{(r)}_2 = - \omega_r \gamma \, .
\end{equation}
The two largest Lyapunov exponents of the system are then $\lambda^{(r)}_1 = \omega_r \gamma$ and $0$, leading to
\begin{equation} \label{eq:asympotic-ee-2d-lyapunov}
    S_A(t) \sim \Lambda_A t \, , \qquad \Lambda_A = \omega_r \gamma \, .
\end{equation}
Comparing Eqs.~\eqref{eq:asympotic-ee-2d}, \eqref{eq:lyapunov-exponents} and \eqref{eq:asympotic-ee-2d-lyapunov}, we see that the asymptotic rate of growth of the entanglement entropy is precisely the value of the positive Lyapunov exponent, which is also equal to $\Lambda_A$.

The coefficent $\Lambda_A$ can also be computed directly from the equation of motion of the normal modes in the case of parametric resonance. A simple model of parametric resonance is provided by the Mathieu equation
\begin{equation}
    \Ddot{x}(t)+\omega^2(t)x(t)=0\,,
\end{equation}
where
\begin{equation} \label{eq:time-dependent-normal-freq}
\omega^2(t)=\omega^2_0 + \alpha \cos(\Omega t)
\end{equation}
is the time-dependent natural frequency, which oscillates around $\omega_0$ with frequency $\Omega=2\omega_0$ and amplitude $\alpha$. It is a well known fact that the \textit{Floquet} exponents $\mu$ can be computed directly from this equation as
\begin{equation} \label{eq:mu}
    \mu=\frac{\alpha}{4 \omega_0}\,.
\end{equation}
In addition, the Floquet exponent is the positive Lyapunov exponent for the stroboscopic evolution in this case. Therefore, in the dynamical Casimir effect, we can write the time-dependent frequency $\Omega_r(t)$ of the resonant mode, given by Eq.~\eqref{eq:omega-d}, in the form \eqref{eq:time-dependent-normal-freq} and compute the Lyapunov exponents directly from Eq.~\eqref{eq:mu}. With $r=(1,1)$, we have
\begin{equation}
    \Omega_r^2=\pi^2 \left[ \frac{1}{L_1^2 ( 1- \epsilon \cos(2\omega_r t))^2}+\frac{1}{L_2^2} \right]\,.
\end{equation}
To first order in $\epsilon$,
\begin{equation}
    \Omega_r^2 = \omega_r^2 + 2 \epsilon \frac{\pi^2}{L_1^2} \cos(2\omega_r t) \, .
\end{equation}
Comparing with the Mathieu equation, we see that $\alpha=2 \epsilon \pi^2/L_1^2$. Substituting into Eq.~\eqref{eq:mu}, setting $\omega_0=\omega_r$, and using Eq.~\eqref{eq:gamma-2p1-d}, we find
\begin{equation}
    \mu= \omega_r \gamma = \Lambda_A \,,
\end{equation}
which agrees with \eqref{eq:lyapunov-exponents}.

In the more explicit derivation of the entropy production rate leading to Eq.~\eqref{eq:asympotic-ee-2d}, we considered a special choice of subsystem $A$. The result remains valid in a much more general context, however. In fact, the arguments leading to the same result in Eq.~\eqref{eq:asympotic-ee-2d-lyapunov} apply to a large class of generic finite-dimensional subsystems, which can be fully characterized from Theorem~\ref{th:SAasymptotics}. In the next section, we provide a detailed characterization of such generic subsystems that applies to any spatial dimension $d\geq 2$.

\begin{figure}[t!]
\centering
  \begin{tikzpicture}
  \draw (0,0) node[inner sep=0pt]{\includegraphics[width=\linewidth]{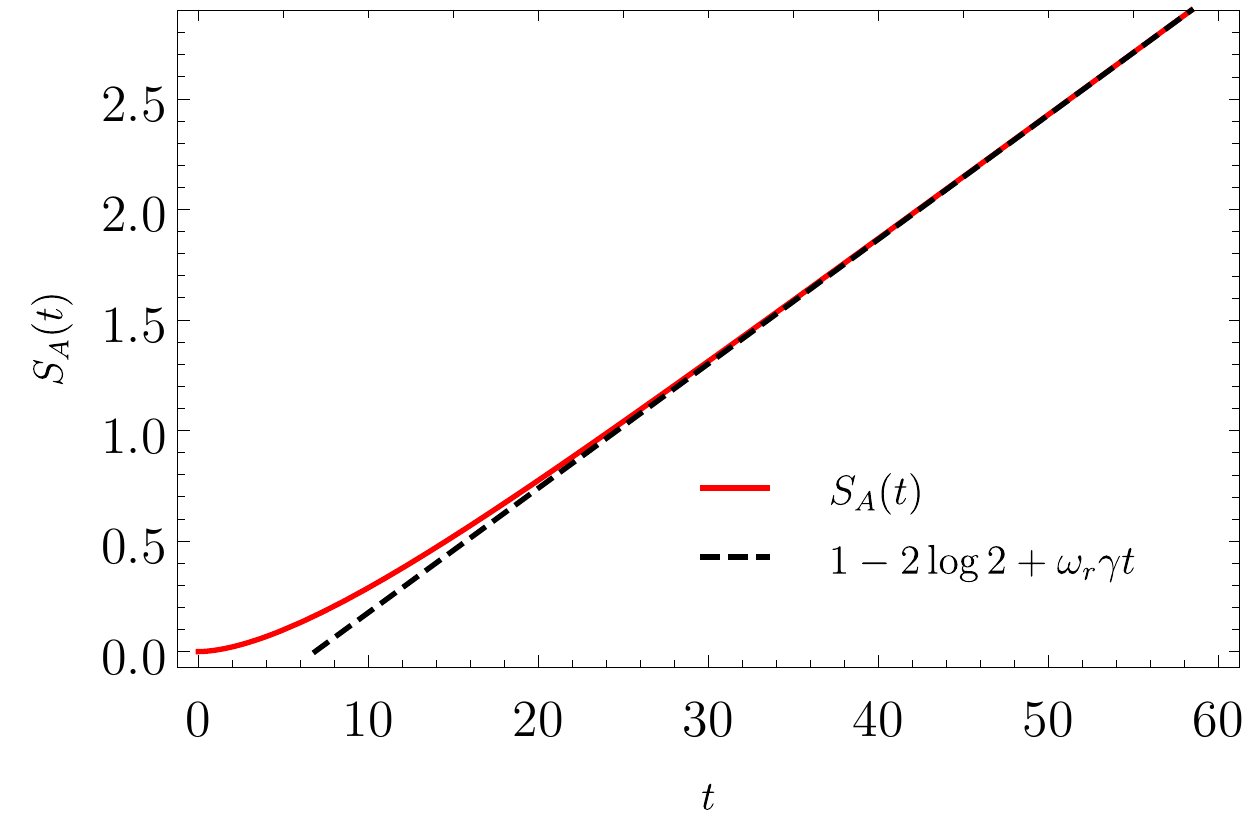}};
  \end{tikzpicture}
  \caption{Comparison of the time-dependent entanglement entropy $S_A(\tau)$ of a resonant mode in $(2+1)$ dimensions with its asymptotic expansion given in Eq.~\eqref{eq:asympotic-ee-2d}.}
  \label{fig4}
\end{figure}

\subsection{Higher dimensions}
For any spatial dimension $d \geq 2$, the dynamical Casimir effect at resonance for the lowest energy mode has the same overall features in the slow-variation approximation, as discussed in Section \ref{sec:d+1-dim}. The time evolution is trivial for all modes, except for the resonant one, which behaves as a single oscillator at parametric resonance. The evolution of the resonant mode is described by the symplectic transformation \eqref{eq:resonant-mode-evolution-d}, with the natural frequency $\omega_r$ and the coefficient $\gamma$ given by Eqs.~\eqref{eq:omega-r} and \eqref{eq:gamma}, respectively. A Lyapunov instability is associated with the resonance. For a generic subsystem, entropy is then produced asymptotically at a constant rate $\Lambda_A=\omega_r \gamma$, which is equal to the positive Lyapunov exponent, $\Lambda_A=\lambda_1^{(r)}$. We now present a simple characterization of such generic subsystems and prove that they satisfy $\Lambda_A=\omega_r \gamma$.

Let $\ell_1,\ell_2\in V^*$ be the eigenvectors of the generator $K^\intercal$ whose eigenvalues $\lambda_1,\lambda_2$ are the nonzero Lyapunov exponents $\lambda_1^{(r)}=-\lambda_2^{(r)}$,
\begin{align}
    \ell_1 = \frac{1}{\sqrt{2}} \begin{bmatrix}
    1 \\ i
    \end{bmatrix} \, , \qquad \ell_2 = \frac{1}{\sqrt{2}} \begin{bmatrix}
    -1 \\ i
    \end{bmatrix} \, .
\end{align}
They form a canonical pair of $V$, $\Omega^{ab} (\ell_1)_a (\ell_2)_b =1$. The remaining eigenvectors of $K^\intercal$ have vanishing eigenvalues. We can construct a linear basis  $\mathcal{D}_L=\{\ell_i\}$ of $V$ composed of canonically conjugate eigenvectors of $K$ that includes $\ell_1$ and $\ell_2$.  The evolution of the basis vectors under the Hamiltonian flow has a simple form:
\begin{align}
    M^\intercal(t)\, \ell_1 &=  e^{\omega_r \gamma t} \ell_1 \, , \nonumber \\
    M^\intercal(t)\, \ell_2 &= e^{-\omega_r \gamma t} \ell_2 \, , \nonumber \\
    M^\intercal(t)\, \ell_i &= \ell_i \, , \quad i > 2 \, .
\end{align}
This basis is adequate for analyzing the expansion of subsystems with time, which determines the asymptotic growth of the R\'enyi and entanglement entropies, as discussed in Section \ref{sec:asymptotic-entanglement}.

Let $A$ be a proper symplectic subspace of $V$ representing a subsystem of interest. Its symplectic complement $B$ is defined as
\begin{equation}
 B = \{ v \in V \mid \Omega_{ab} v^a u^b = 0 , \forall u \in A \} \, .
\end{equation}
We have $V=A\oplus B$, and any vector can be uniquely decomposed as $v=v_A + v_B$, with $v_A \in A$ and $v_B \in B$. The symplectic projection of $v$ to the subsystem $A$ is the component $v_A$ in this decomposition. The decomposition $V=A\oplus B$ induces a unique decomposition $V^*=A^*\oplus B^*$ of the dual phase space based on the isomorphism induced by $\Omega_{ab}^{-1}$, \ie we can define $A^*=\{\Omega^{-1}_{ab}v^b|v^a\in A\}$ and $B^*=\{\Omega^{-1}_{ab}v^b|v^a\in B\}$. We call $A$ (or, equivalently, $A^*$) a generic subsystem if:
\begin{itemize}
    \item[(i)] $\ell_2$ has a nonzero symplectic projection on $A^*$;
    \item[(ii)] $\ell_2$ has a nonzero symplectic projection on $B^*$.
\end{itemize}

The condition (i) implies that there exists a vector $\theta_1 \in A^*$ such that
\begin{equation}
    \Omega^{ab} (\theta_1)_a (\ell_2)_b \neq 0 \, .
\end{equation}
Expanding $\theta_1$ in the basis $\mathcal{D}_L$, $\theta_1 = \sum_{k'}\theta_1^{k'}  \ell_{k'}$, and exploring the fact that $\mathcal{D}_L$ is a Darboux basis, we find that $\theta_1^{1'} \neq 0$, that is, $\theta_1$ has a nonzero expansion coefficient in the first basis vector of $\mathcal{D}_L$. 

Now let $\mathcal{D}_A=\{\theta_r \mid r=1,\dots,2 N_A\}$ be a Darboux basis of the subsystem $A$ that includes $\theta_1$. Expanding its elements in the basis $\mathcal{D}_L$, we can write $\theta_r = \theta_r^{k'}  \ell_{k'}$. By a simple procedure of Gaussian elimination, we can construct a new linear basis $\tilde{\mathcal{D}}_A =\{\tilde{\theta}_r\}$ of $A$ such that $\tilde{\theta}_1 = \theta_1$, and $\tilde{\theta}_r^1 = 0$, for $r \neq 1$. Moreover, it follows from condition (ii) that, for all $r$, $\tilde{\theta}_r^{k'}\neq 0$ for some $k'\neq 2$, otherwise $\ell_2$ would be an element of $A$ and have a vanishing symplectic projection on $B$. In short, the only vector with a contribution from the expanding unstable mode is $\theta_1$, and all other modes have a contribution from some stable mode. As a result, for large times, we have
\begin{align} \label{eq:d-tilde-growth}
    \| M(t)\, \theta_1 \| &\propto  e^{\omega_r \gamma t} \, , \nonumber \\
    \| M(t)\, \theta_i \| &\to \textrm{constant} \, , \quad i \geq 2 \, .
\end{align}

We are interested in computing how the volume of the unit cube with sides $\theta_r$ evolves for large times, $\Vol (M^\intercal(t) \mathcal{D}_A)$. Since the basis $\mathcal{D}_A$ is related to $\tilde{\mathcal{D}}_A$ by a fixed transformation, their volumes are related by a time-independent factor, $\Vol(M(t) \mathcal{D}_A) \propto \Vol(M(t) \tilde{\mathcal{D}}_A)$. But then, from Eq.~\eqref{eq:d-tilde-growth},
\begin{equation}
 \Vol(M^\intercal(t) \mathcal{D}_A) \propto e^{\omega_r \gamma t} \, ,
\end{equation}
and we find for the subsystem exponent:
\begin{align}
    \Lambda_A &= \lim_{t \to \infty} \frac{1}{t} \log \frac{\Vol (M^\intercal(t) \mathcal{D}_A)}{\Vol (\mathcal{D}_A)}\\
    &= \omega_r \gamma \, .
\end{align}
This completes the proof that $\Lambda_A = \omega_r \gamma$ for a generic subsystem satisfying conditions (i) and (ii).

Note that in $(2+1)$ dimensions and higher, the Lyapunov instability is restricted to the resonant mode, so that the condition (ii) is violated for the subsystem composed of the resonant mode only. This is the reason why we had to consider a subsystem mixing more than one mode to obtain a nontrivial behavior in Section \ref{sec:2+1-entropy}.

\section{Discussion}\label{sec:discussion}
We analyzed the production of R\'enyi and entanglement entropies for the dynamical Casimir effect at resonance in arbitrary dimensions. In our settings, a $d$-dimensional parallelepided has one of its perfectly reflecting boundaries oscillating harmonically, while all others are fixed. By setting the frequency of oscillation equal to twice that of some normal mode $r$, the system is set at parametric resonance and particles are continuously produced from the vacuum. We considered the regime where the oscillations of the mirror are small, with a relative amplitude $\epsilon = \Delta L/L \ll 1$. We combined analytical techniques introduced in \cite{Dodonov:1996zz} for the study of pair creation under these circumstances with general symplectic techniques applicable to the analysis of entanglement production in general time-dependent linear systems \cite{Bianchi:2015fra,Bianchi:2017kgb} in order to provide a thorough description of the build up of correlations and entanglement at resonance. 

The system has two characteristic time scales: the period of the resonant mode, $\omega_r^{-1}$, and the time scale $(\omega_r \epsilon)^{-1}$ for which the system starts to depart considerably from the initial configuration at the ground state, which depends on the amplitude of oscillation of the mirror. Since $\epsilon \ll 1$, these scales are well separated. By averaging over the fast oscillations of normal modes, the equations of motion for the mode amplitudes are simplified and can be solved exaclty, as shown in \cite{Dodonov:1996zz}. The cases of $d=1$ and $3$ are treated in \cite{Dodonov:1996zz}, where a Heisenberg representation is used for $d=1$ and a Schr\"odinger representation for $d=3$. We extended the treatment for arbitrary dimensions,  and showed that it can be implemented in the Heisenberg representation in any dimension.

The moving boundary induces a nontrivial evolution of the cavity modes, which in general includes two kinds of interactions: particle creation from the vacuum and mode coupling. The vacuum becomes unstable due to the time-dependent boundary conditions, and the created excitations can be transferred from one mode to another.

For any spatial dimension $d \geq 2$, the intermode interactions are strongly suppressed. This is due to the fact that the energies required to excite nonresonant modes do not match differences between energy levels of the resonant mode, making such transitions unlikely. The amplitude of the resonant mode then increases exponentially, being continuously pumped by the parametric resonance, providing an example of a Floquet instability. In the approximation where the fast oscillations are integrated out, the mode amplitude grows monotonically, and the original Floquet instability then corresponds to a Lyapunov instability. We found that, for any subsystem that intersects both the resonant mode and its complement, R\'enyi and entanglement entropies are produced at a constant rate for large times, $S_A \sim R_A \sim \Lambda_A t$, where the production rate is equal to the positive Lyapunov exponent of the system, as expected from general results of \cite{Bianchi:2017kgb}. It can be explicitly written as $\Lambda_A = \lambda_1^{(r)} = \omega_r \gamma$, with $\omega_r$ and $\gamma$ given by Eqs.~\eqref{eq:omega-r} and \eqref{eq:gamma}. 

An alternate technique for studying a Floquet instability in a system with periodicity $\tau$ consists of looking at its stroboscopic evolution. That is, if $M(t)$ is its Hamiltonian flow, one focus on the discrete series of instants $M(n \tau)$, $n \in \mathbb{Z}$. An effective, time-independent Hamiltonian can then be defined in terms of $\tau^{-1} \log M(\tau)$ such that at full periods its evolution agrees with that of the original system. This effective Hamiltonian displays a Lyapunov instability associated with the original Floquet instability. Production of entropy in systems with Floquet instabilities was studied in detail in \cite{Bianchi:2017kgb} within the stroboscopic approach. Applying it to the dynamical Casimir effect at resonance for $d \geq 2$, we found the same results for the entanglement entropy production as in the approximation introduced in \cite{Dodonov:1996zz}, where the fast oscillations are integrated out. Our result suggests that, for systems satisfying a slow-variation condition, $M(\tau) \sim \mathbb{1}$, the technique of averaging out the fast oscillations provides a good approximation for computing the Lyapunov exponents of the effective Hamiltonian associated with the stroboscopic dynamics of the system.

The case of one spatial dimension displays special features that lead to a different asymptotic behavior for the R\'enyi and entanglement entropies. In constrast to what happens in higher dimensions, for $d=1$, the energy required to excite any normal mode is a multiple of the difference $\hbar \omega_r$ between energy levels of the lowest energy mode, which we set at resonance. All modes are then strongly coupled, and excitations produced at the resonant mode can transition to other modes. The energy injected in the field by the moving mirror thereby spreads throughout an infinite number of modes. As a result, two new effects take place. First, the number of particles in the resonant mode, instead of growing exponentially, increases only linearly with time, as first recognized in \cite{Dodonov:1996zz}. In addition, the resonant mode is strongly entangled with the nonresonant modes. We found that the associated R\'enyi and entanglement entropies of the resonant mode display a logarithmic growth, $S_A \sim 1/2 \log \tau$, where $\tau=\epsilon \omega t/2$ is a dimensionless time variable. It is interesting that, even in the presence of parametric resonance, the system does not display features expected for unstable systems, i.e., a linear production of entropy. Instead, its behavior is qualitatively similar to that of metastable Hamiltonians discussed in \cite{Hackl:2017ndi}, except for the fact that the logarithmic function is here multiplied by a factor of $1/2$, while for metastable systems the prefactor is always an integer.

The techniques here applied for the analysis of entropy production in the dynamical Casimir effect at resonance can also be applied to other mirror trajectories. For finite cavities, the adaptation of the procedure followed in this work should be straightforward. A natural question for further developments is how to implement our techniques for the dynamical Casimir effect with a single mirror.  As is well known, one of the main motivations for the study of the dynamical Casimir effect is that it can mimic the Hawking effect \cite{Davies:1976hi}. The application of symplectic techniques for the study of entropy production in general field theories is discussed in \cite{Bianchi:2017kgb}, and we hope our approach can be extended along such lines to the analysis of the time evolution of the entanglement entropy in models simulating aspects of particle creation in the Hawking effect \cite{Carlitz:1986nh,Good:2013lca}. 

\begin{acknowledgments}
We thank Eugenio Bianchi and Rodolfo R. Soldati for inspiring discussions during the development of this project. We thank Eugenio Bianchi for sharing his notes on Floquet instabilities and for many invaluable discussions during the conception and initial stages of this work. IR acknowledges support from CAPES and CNPq. LH thanks the members of the physics department at the Universidade Federal de Minas Gerais for their hospitality during his visit in the summer of 2018. LH acknowledges support through the Brazil-U.S. Exchange Program of the American Physical Society which made this visit possible. LH is funded by the the Max Planck Harvard Research Center for Quantum Optics and supported by the Deutsche Forschungsgemeinschaft (DFG, German Research Foundation) under Germany’s Excellence Strategy – EXC-2111 – 39081486. NY acknowledges support from CNPq, Brazil, under the grant PQ 306744/2018-0, and from the Programa Institucional de Aux\'ilio \`a Pesquisa de Docentes Rec\'em-Contratados, PRPq/UFMG.
\end{acknowledgments}

\bibliographystyle{apsrev4-1}
\bibliography{DCE}

\end{document}